\journal{Computer Methods in Applied Mechanics and Engineering}
\DeclareSymbolFont{Symbols}{OMS}{cmsy}{m}{n}
\DeclareMathSymbol{\Setminus}{\mathbin}{Symbols}{"6E}
\pgfplotsset{compat=newest}
\newtheorem{remark}{Remark}
\newtheorem{theorem}{Theorem}
\newtheorem{lemma}[theorem]{Lemma}
\newtheorem{corollary}[theorem]{Corollary}
\newcommand{\closure}[1]{\ensuremath{\overline{#1}}}
\newcommand{\cellSet}[1]{\ensuremath{\mathcal{C}_{\mathrm{#1}}}}
\newcommand{\faceSet}[1]{\ensuremath{\mathcal{F}_{\mathrm{#1}}}}
\newcommand{\edgeSet}[1]{\ensuremath{\mathcal{E}_{\mathrm{#1}}}}
\newcommand{\nodeSet}[1]{\ensuremath{\mathcal{N}_{\mathrm{#1}}}}
\newcommand{\graph}[1]{\ensuremath{\mathcal{G}_{\mathrm{#1}}}}
\newcommand{\crossEdges}{\ensuremath{\edgeSet{II}}}
\newcommand{\dimSpace}[1]{\ensuremath{\revised{\operatorname{dim}}\left(#1\right)}}
\newcommand{\cardSet}[1]{\ensuremath{\revised{\operatorname{card}}\left(#1\right)}}
\newcommand{\Curl}{\ensuremath{\operatorname{curl}}}
\newcommand{\Div}{\ensuremath{\operatorname{div}}}
\newcommand{\Grad}{\ensuremath{\operatorname{grad}}}
\newcommand{\trans}{\ensuremath{^{\top}}}
\newcommand{\reluctivity} {\ensuremath{\nu}} 
\newcommand{\vecs}[1]{\boldsymbol{#1}}
\newcommand{\vfield}{\ensuremath{\vecs{v}}}
\newcommand{\baseComp}{\ensuremath{\vecs{w}}}
\newcommand{\Afield}{\ensuremath{\vecs{A}}}
\newcommand{\Jfield}{\ensuremath{\vecs{J}}} 
\newcommand{\Bfield}{\ensuremath{\vecs{B}}}
\newcommand{\nfield}{\ensuremath{\vecs{n}}}
\newcommand{\zfield}{\ensuremath{\vecs{0}}}
\newcommand{\harmonic}{\ensuremath{\vecs{h}}}
\newcommand{\gNfield}{\ensuremath{\vecs{g}_{\mathrm{N}}}}
\newcommand{\gDfield}{\ensuremath{\vecs{g}_{\mathrm{D}}}}
\newcommand{\Hcurl}[1]{\ensuremath{{H}(\mathrm{curl};#1)}}
\newcommand{\Hncurl}[1]{\revised{\ensuremath{{H}_{0,\Gamma_{\mathrm{D}}}(\mathrm{curl};#1)}}}
\newcommand{\norm}[2]{\ensuremath{\left\vert\left\vert #1 \right\vert\right\vert_{#2}}}
\newcommand{\volPairing}[2]{\ensuremath{\int_{#1}#2\operatorname{dV}}}
\newcommand{\surfPairing}[2]{\ensuremath{\int_{#1}#2\operatorname{dA}}}
\newcommand{\errBfield}{\ensuremath{\epsilon_{\Bfield}}}
\newcommand{\stiffMat}{\ensuremath{\mathbf{K}}}
\newcommand{\couplMat}{\ensuremath{\mathbf{B}}}
\newcommand{\primMat}{\ensuremath{\mathbf{C}_{\mathrm{p}}}}
\newcommand{\coarseMat}{\ensuremath{\mathbf{F}}}
\newcommand{\interfaceMat}{\ensuremath{\mathbf{S}}}
\newcommand{\dofs}{\ensuremath{\mathbf{a}}}
\newcommand{\mults}{\boldsymbol{\lambda}}
\newcommand{\rhs}{\ensuremath{\mathbf{j}}}
\newcommand{\cond}[1]{\ensuremath{\kappa\big(#1\big)}}
\newcommand{\minEigv}[1]{\ensuremath{\sigma_{\mathrm{min}}\big(#1\big)}}
\newcommand{\maxEigv}[1]{\ensuremath{\sigma_{\mathrm{max}}\big(#1\big)}}
\newcommand{\nCoarse}{\revised{\ensuremath{n_{\mathrm{gp}}}}}
\newcommand{\cgTol}{\ensuremath{\varepsilon_{\mathrm{tol}}}}
\newcommand{\revised}[1]{{\color{black}#1}}
\newcommand{\ma}[1]{{\color{black}#1}}
\begin{document}
    
	\begin{frontmatter}

		\title{Tree-Cotree-Based Tearing and Interconnecting for 3D Magnetostatics:\\ A Dual-Primal Approach}

		\author[cemaddress,cceaddress]{Mario Mally}
		\ead{mario.mally@tu-darmstadt.de}

		\author[epfladdress]{Bernard Kapidani}
		\ead{bernard.kapidani@epfl.ch}

		\author[cemaddress,cceaddress]{Melina Merkel}
		\ead{melina.merkel@tu-darmstadt.de}

		\author[cemaddress,cceaddress]{Sebastian Schöps}
		\ead{sebastian.schoep@tu-darmstadt.de}

		\author[uscaddress,citmaga]{Rafael Vázquez}
		\ead{rafael.vazquez@usc.es}

		\address[cemaddress]{Computational Electromagnetics Group, Technische Universität Darmstadt, 64289 Darmstadt, Germany}
		\address[cceaddress]{Centre for Computational Engineering, Technische Universität Darmstadt, 64289 Darmstadt, Germany}
		\address[epfladdress]{Chair of Numerical Modelling and Simulation, École Polytechnique Fédérale de Lausanne, 1015 Lausanne, Switzerland}
		\address[uscaddress]{Department of Applied Mathematics, Universidade de Santiago de Compostela, 15782, Santiago de Compostela, Spain}
		\address[citmaga]{Galician Centre for Mathematical Research and Technology (CITMAga), 15782, Santiago de Compostela, Spain}
		\begin{abstract}
			The simulation of electromagnetic devices with complex geometries and large-scale discrete systems benefits from advanced computational methods like IsoGeometric Analysis and Domain Decomposition. In this paper, we employ both concepts in an Isogeometric Tearing and Interconnecting method to enable the use of parallel computations for magnetostatic problems. We address the underlying non-uniqueness by using a graph-theoretic approach, the tree-cotree decomposition. The classical tree-cotree gauging is adapted to be feasible for parallelization, which requires that all local subsystems are uniquely solvable.
			Our contribution consists of an explicit algorithm for constructing compatible trees and combining it with a dual-primal approach to enable parallelization. The correctness of the proposed approach is proved and verified by numerical experiments, showing its accuracy, scalability and optimal convergence.
		\end{abstract}

		\begin{keyword}
			 IGA\sep IETI\sep Dual-Primal\sep Tree-Cotree Gauging\sep Magnetostatics
		\end{keyword}

	\end{frontmatter}

	\section{Introduction}
    Simulation-based design workflows for engineering products are ubiquitous in academia and industry. Their success relies on accurate and efficient solvers. A suitable method to achieve this is IsoGeometric Analysis (IGA), due to its ability to accurately represent complex geometries and to provide higher continuity in the solution space. This is achieved by using NURBS (Non-Uniform Rational B-Splines) to represent the geometry and \revised{NURBS or} B-Splines to approximate the solution \cite{Hughes_2005aa}. These advantages are particularly attractive for the magnetostatic simulation of permanent magnet synchronous machines due to their cylindrical shape \cite{Bontinck_2018ac} and the computation of sensitive quantities, such as torque \cite{Merkel_2022ab}. This is made possible by the introduction of spline spaces suitable for the representation of electromagnetic fields \cite{Buffa_2010aa,Buffa_2015aa,Buffa_2019ac}.

However, as the complexity and size of simulation models increase, it becomes essential to exploit parallel computing architectures to effectively handle the computational load. We follow the Domain Decomposition (DD) paradigm \cite{Toselli_2005aa} and in particular the Tearing and Interconnecting (TI) methods originally proposed by \citeauthor{Farhat_1991aa} in \cite{Farhat_1991aa}. These methods are probably best known as Finite Element Tearing and Interconnecting (FETI), while the IGA counterpart is called Isogeometric Tearing and Interconnecting (IETI) \cite{Kleiss_2012ab,Bouclier_2022aa}. Both decompose the computational domain into subdomains, couple them weakly at conformal interfaces using Lagrange multipliers, and thus promise scalability for large problems. However, solvability is a fundamental issue of decomposed Laplace-type problems due to the lack of essential boundary conditions for `floating' subdomains. Two main approaches address this problem: The dual-primal approach, see e.g. \cite{Farhat_2001aa,Klawonn_2002aa,Kleiss_2012ab} or the total/all-floating approach~\cite{Dostal_2006aa,Of_2009ab,Bosy_2020aa}. 

For high-frequency electromagnetic problems, FETI has been investigated for scattering and wave propagation, as discussed in \cite{Li_2009ab} and references therein. In the case of magnetism, it was applied to the scalar potential formulation, i.e. a Laplace-type problem, in \cite{Marcsa_2013aa,Ghenai_2024aa,Schneckenleitner_2022aa}. The vector potential formulation suffers from non-uniqueness issues due to the kernel of the curl operator. This formulation is considered in \cite{Toselli_2006aa,Keranen_2015aa} and \cite{Yao_2012aa,Yao_2012ab}. The first two contributions regularize the problem by perturbing the operator. We follow and extend the idea of \citeauthor{Yao_2012aa} \revised{in \cite{Yao_2012aa,Yao_2012ab}} and employ a tree-cotree decomposition \cite{Albanese_1988aa,Munteanu_2002aa,Kapidani_2022aa,Rapetti_2022aa}.

Let us state our contributions. We provide a mathematical justification of the approach from \citeauthor{Yao_2012aa} \revised{Their work only considers} problems with full Dirichlet boundary conditions and low-order finite elements. We give a new explicit algorithm for constructing a tree-cotree decomposition that also applies to problems with mixed boundary conditions and high-order isogeometric discretizations. Finally, we validate our approach with numerical experiments using the IGA-library \texttt{GeoPDEs} \cite{Vazquez_2016aa}.

The paper is structured as follows: \autoref{sec:form} introduces the relevant Maxwell equations and provides an overview of the weak FETI/IETI formulation for magnetostatics as well as its discretization. \autoref{sec:treeGraph} explains the fundamental concepts of tree-cotree decompositions and presents an explicit algorithm for constructing appropriate trees.
In \autoref{sec:dualPrimal}, we examine and analyze the constructed tree within the context of dual-primal TI, detailing how to correctly select primal degrees of freedom to ensure local invertibility in general settings. Finally, in \autoref{sec:nums} the proposed approach is verified with numerical experiments and its scalability is examined with respect to relevant properties for large-scale problems.

	\section{Magnetostatics on Decomposed Domains}\label{sec:form}
    In the low-frequency regime, e.g. for the simulation of permanent magnet synchronous machines \cite{Salon_1995aa}, the magnetostatic approximation of Maxwell's equations \cite{Jackson_1998aa}
\begin{empheq}[left = \empheqlbrace]{align}
	\Curl\left(\reluctivity\Bfield\right) &= \Jfield\label{eq:magnstat1}\\
	\Div\Bfield &= 0\label{eq:magnstat2}
\end{empheq}
is most relevant. The system is formulated in terms of the magnetic flux density $\Bfield$ and the excitation current density $\Jfield$. For the underlying material parameter $\reluctivity$ (reluctivity), one typically assumes \revised{$0<\reluctivity_{1}\leq\reluctivity\leq\reluctivity_{2}<\infty$}. \revised{We introduce a magnetic vector potential via $\Bfield=\Curl\Afield$, which automatically satisfies~\eqref{eq:magnstat2}. Inserting it into~\eqref{eq:magnstat1} results in 
\begin{align}
	\Curl\left(\reluctivity\Curl\Afield\right)&=\Jfield.
	\label{eq:vecPot1}
\end{align}
We assume that these laws are prescribed on a bounded, simply-connected domain $\Omega\subset\mathbb{R}^{3}$ which we will relax only in \autoref{subsec:toroidal} where a hollow cylinder is simulated. On the boundary $\partial\Omega$, conditions are prescribed to account for the behavior of the fields outside of the domain. The most common conditions are
\begin{align}
    \Afield\times\nfield &= \gDfield \quad\text{on}~~\Gamma_{\mathrm{D}},\label{eq:vecPot2}\\
    \left(\reluctivity\Curl\Afield\right)\times\nfield &= \gNfield \quad\text{on}~~\Gamma_{\mathrm{N}},\label{eq:vecPot3}
\end{align}
which define (inhomogeneous) Dirichlet and Neumann boundary conditions on $\Gamma_{\mathrm{D}}$ and $\Gamma_{\mathrm{N}}$, respectively. Here, $\nfield$ denotes the outward normal vector of $\partial\Omega$. For simplicity of presentation, we focus on $\gDfield=0$ and $\gNfield=0$ which correspond to perfect electric and magnetic conducting boundary conditions, respectively. We assume that $\partial\Omega=\closure{\Gamma}_{\mathrm{D}}\cup\closure{\Gamma}_{\mathrm{N}}$ (where $\closure{D}$ denotes the closure of open set $D$) and $\Gamma_{\mathrm{D}}\cap\Gamma_{\mathrm{N}}=\emptyset$. Furthermore we require that ${\Gamma}_{\mathrm{D}}$ is connected, which will be relaxed in the test configuration of \autoref{subsec:scalab}.}

The vector potential formulation \eqref{eq:vecPot1}-\eqref{eq:vecPot3} has a major intricacy originating from the non-trivial kernel of the $\mathrm{curl}$-operator. To summarize the descriptions of \cite{Arnold_2018aa}, we can write\revised{
\begin{equation}
    \operatorname{ker}(\Curl) = \operatorname{im}(\Grad)\cup\operatorname{span}\left(\harmonic^{(k)}\right)_{k=1}^{n_{\mathrm{h}}}\label{eq:kernel}
\end{equation}
and characterize the kernel with gradient fields $\operatorname{im}(\Grad)$ and a linear combination of $n_{\mathrm{h}}$ harmonic fields $\harmonic^{(k)}$. \ma{The number of harmonic fields $n_{\mathrm{h}}$ depends on both the number of holes, i.e., the first Betti number, and the configuration of the boundary conditions. In the following, we assume $n_{\mathrm{h}}=0$, i.e., simply-connected domains $\Omega$ with connected Dirichlet boundary $\Gamma_{\mathrm{D}}$.} Nonetheless, we already introduce harmonic fields here, to be able to explain and comment on extensions to this paper. The identity \eqref{eq:kernel} implies that only $\Afield_{\perp}\in\operatorname{ker}(\Curl)^{\perp}$ can be uniquely determined from \eqref{eq:vecPot1}-\eqref{eq:vecPot3}. The remaining part $\mathbf{A}_0\in\operatorname{ker}(\Curl)$ is undetermined and can be chosen freely, due to}
\begin{equation}
    \Bfield = \Curl\Afield = \Curl\left(\Afield_{\perp} + \Afield_0\right) = \Curl\Afield_{\perp}.
\end{equation}
By fixing the kernel component $\Afield_0$, which does not affect our quantity of interest $\Bfield$, and reformulating \eqref{eq:vecPot1}, one can compute a unique $\Afield_{\perp}$ if the current density is compatible, i.e. $\Jfield\in\operatorname{ker}(\Curl)^{\perp}$, \revised{see e.g.} \cite{Arnold_2018aa}. This procedure is denoted as gauging in the literature, e.g. \cite{Kettunen_1999aa}. One particular example, for simply-connected domains, is the Coulomb gauge where $\Div\Afield=\Div\Afield_0=0$ is imposed as an additional constraint. In this paper, we focus on the tree-cotree gauge, which is applied in the discrete setting for the TI approach. Consequently, we need to discuss our weak formulation, its discretization, and the extension to domain decomposition before we can elaborate more on the gauging technique.

\subsection{Weak Formulation and its Discretization}\label{subsec:SingleDomDisc}
Let us state \eqref{eq:vecPot1}-\eqref{eq:vecPot3} in its weak form as: Find $\Afield\in\Hncurl{\Omega}$ such that
\begin{equation}
    \volPairing{\Omega}{(\reluctivity\Curl\Afield)\cdot(\Curl\vfield)}=\volPairing{\Omega}{\Jfield\cdot\vfield}+\surfPairing{\Gamma_{\mathrm{N}}}{\gNfield\cdot\vfield}\label{eq:weakVecPot}
\end{equation}
for all $\vfield\in\Hncurl{\Omega}$, where $\Hncurl{\Omega}$ denotes the adequate Hilbert space for vector potentials with vanishing tangential component on $\Gamma_{\mathrm{D}}$ \cite{Monk_2003aa}. We follow the typical Ritz-Galerkin procedure and introduce a finite-dimensional basis $\{\baseComp_i\}_{i=1}^{n_{\mathrm{g}}}$ such that $\mathbb{V}\coloneqq\operatorname{span}(\baseComp_i)_{i=1}^{n_{\mathrm{g}}}\subset\Hcurl{\Omega}$.

We approximate $\Afield$ by a linear combination of finitely many basis functions, i.e.
\begin{equation}
    \Afield_h=\sum_{i=1}^{n_{\mathrm{g}}}a_i\baseComp_i\in\mathbb{V}.
\end{equation}
In this context, we call $a_i$ a degree of freedom (DOF) and introduce $\tilde{\dofs}_{\mathrm{g}}\in\mathbb{R}^{n_{\mathrm{g}}}$ as the vector of all degrees of freedom (DOFs). Inserting the discrete field $\Afield_h$ into \eqref{eq:weakVecPot} leads to
\begin{equation}
    \left(\tilde{\stiffMat}_{\mathrm{g}}\right)_{ij} \coloneqq \volPairing{\Omega}{\reluctivity\left(\Curl\baseComp_j\right)\cdot\left(\Curl\baseComp_i\right)}
\end{equation}
for the entries of stiffness matrix $\tilde{\stiffMat}_{\mathrm{g}}\in\mathbb{R}^{n_{\mathrm{g}}\times n_{\mathrm{g}}}$ and to
\begin{equation}
    \left(\tilde{\rhs}_{\mathrm{g}}\right)_i \coloneqq \volPairing{\Omega}{\Jfield\cdot\baseComp_i} + \surfPairing{\Gamma_{\mathrm{N}}}{\gNfield\cdot\baseComp_i}
\end{equation}
for the right hand side $\tilde{\rhs}_{\mathrm{g}}\in\mathbb{R}^{n_{\mathrm{g}}}$. Here, we added the subscript $\mathrm{g}$ to emphasize that this is the global problem. \ma{Under mild assumptions, it is possible to eliminate the Dirichlet DOFs as in \cite[A.2]{Formaggia_2012aa} to obtain the linear system}
\begin{equation}
    \stiffMat_{\mathrm{g}}\dofs_{\mathrm{g}}=\rhs_{\mathrm{g}}.\label{eq:discVecPot}
\end{equation}
\revised{The entities $\stiffMat_{\mathrm{g}}$, $\rhs_{\mathrm{g}}$ and $\dofs_{\mathrm{g}}$ are the appropriately modified versions of $\tilde{\stiffMat}_{\mathrm{g}}$, $\tilde{\rhs}_{\mathrm{g}}$ and $\tilde{\dofs}_{\mathrm{g}}$, respectively.}

\revised{Note that while we expect our findings to apply to any higher-order FEM discretization on, e.g. tetrahedral meshes~\cite{Alonso-Rodriguez_2020aa,Rapetti_2022aa}, we focus on the multipatch IGA approach with tensor product, high-order, edge-element basis functions. In this framework, the domain $\Omega$ is divided into $N$ patches $\Omega^{(i)}$. The decomposition satisfies $\closure{\Omega}=\bigcup_{i}\closure{\Omega}^{(i)}$ and the subdomains are mutually disjoint, i.e. $\Omega^{(j)}\cap\Omega^{(k)}=\emptyset$ for $j\neq k$. The discrete space is defined as $\mathbb{V} = \{ \ensuremath{\vecs{u}} \in \Hcurl{\Omega} : \ensuremath{\vecs{u}} |_{\Omega^{(i)}} \in \mathbb{V}^{(i)} \}$. For each patch the subspace $\mathbb{V}^{(i)}=S^1_{p}(\Omega^{(i)})$ is a space of splines of mixed degree, and it has the form
\begin{align}
    S_p^1(\Omega^{(i)}) &= \{ \ensuremath{\vecs{u}} \in \Hcurl{\Omega^{(i)}} : \ensuremath{\vecs{u}} \circ \ensuremath{\vecs{F}}^{(i)} = (D \ensuremath{\vecs{F}}^{(i)})^{-\top} \hat {\ensuremath{\vecs{u}}}, \, \hat {\ensuremath{\vecs{u}}} \in S_p^1(\hat \Omega) \}, \label{eq:iga_space1}
    \\
    S_p^1(\hat \Omega) &= S_{p-1,p,p}(\Xi'_1,\Xi_2,\Xi_3) \times S_{p,p-1,p}(\Xi_1,\Xi'_2,\Xi_3) \times S_{p,p,p-1}(\Xi_1,\Xi_2,\Xi'_3),\label{eq:iga_space2}
\end{align}
where $\ensuremath{\vecs{F}}^{(i)}$ is a regular mapping from the unit cube $\hat \Omega$ to the physical domain $\Omega^{(i)}$ and $D \ensuremath{\vecs{F}}^{(i)}$ its non-singular Jacobian, $p$ is the degree, and $\Xi_k$, $\Xi'_k$ are the knot vectors in each parametric direction for degrees $p$ and $p-1$, respectively, see \cite[Sec.~5]{Beirao-da-Veiga_2014aa} for details. As standard in the IGA setting, inserting knots in the knot vectors translates into mesh refinement, and we will denote the mesh size of the patches by $h$.

\ma{We also assume, following \cite[Assumption~3.2]{Beirao-da-Veiga_2014aa}, that the Dirichlet boundary $\Gamma_{\mathrm{D}}$ is a union of patch-faces, i.e., images of sides of the unit cube mapped through $\ensuremath{\vecs{F}}^{(i)}$. This simplifies the elimination of Dirichlet DOFs, see \cite[Sec.~4.1.3]{Vazquez_2016aa}.}


With the choice of the basis as in \cite{Beirao-da-Veiga_2014aa}, there is a one-to-one correspondence between the basis functions $\baseComp_i$ (or the DOFs $a_i$) and the edges of the control mesh \cite{Kapidani_2022aa,Kapidani_2024aa}, which we interpret as a graph and denote by
\begin{equation*}
    \graph{}=\left(\nodeSet{},\edgeSet{}\right).
\end{equation*}
We write $\nodeSet{}$ and $\edgeSet{}$ for the sets of nodes and edges of the graph, respectively. In the lowest-order case, i.e. $p=1$, this control mesh corresponds to the mesh of a conventional FEM discretization with lowest-order Nédélec-type hexahedral elements.

\begin{remark}
    We note that the Dirichlet boundary condition could be directly incorporated into the spaces $\mathbb{V}$ and $\mathbb{V}^{(i)}$. However, we are not eliminating the corresponding DOFs, because it helps to identify the basis functions with edges of the control mesh. Maintaining the functions on the Dirichlet boundary also simplifies the explanation of how to construct an appropriate spanning tree.
\end{remark}
}

\subsection{Nonoverlapping Domain Decomposition}
\revised{Although not the only choice, it is natural to consider each patch $\Omega^{(i)}$ of the multipatch decomposition \eqref{eq:iga_space2} as a separate subdomain in the TI context, see e.g. \cite{Kleiss_2012ab}. Consequently, we have $N$ subdomains for an IGA multipatch space consisting of $N$ patches. This decomposition automatically fulfills all usual assumptions including simply-connected subdomains since each patch is given by a regular mapping from the unit cube, see \eqref{eq:iga_space1}. Consequently, we also consider
\begin{equation*}
    \mathbb{V}^{(i)}=S_p^1(\Omega^{(i)})=\operatorname{span}\left(\baseComp_j^{(i)}\right)_{j=1}^{n_{i}}\subset\Hcurl{\Omega^{(i)}}
\end{equation*}
for the discretization of a local magnetostatic problem on each subdomain $\Omega^{(i)}$, where $\{ \baseComp_j^{(i)}\}_{j=1}^{n_i}$ denote the basis functions of $\mathbb{V}^{(i)}$.

Let us introduce some more notation related to domain decomposition.} Each boundary $\partial\Omega^{(i)}$ shall consist of open facets $\Gamma^{(i,k)}$ such that $\partial\Omega^{(i)}=\bigcup_k\closure{\Gamma}^{(i,k)}$ and $\Gamma^{(i,k)}\cap\Gamma^{(i,\ell)}=\emptyset$ for $k\neq\ell$. \revised{A facet is either a subset of the Dirichlet or Neumann boundary, i.e. $\Gamma^{(i,k)}\subset\Gamma_{\mathrm{D}}$ or $\Gamma^{(i,k)}\subset\Gamma_{\mathrm{N}}$, or an interface $\Gamma_{ij}:=\Gamma^{(i,k)}=\Gamma^{(j,l)}$ for some $i<j$ and $k$,$l$.} Finally, each facet boundary consists of edges $\closure{\lambda}^{(i,j,k)}$ such that $\partial\Gamma^{(i,j)}=\bigcup_{k}\closure{\lambda}^{(i,j,k)}$ and $\lambda^{(i,j,k)}\cap\lambda^{(i,j,\ell)}=\emptyset$ for $k\neq\ell$. The union of all edges \revised{$\closure{\lambda}^{(i,j,k)}$} (including their endpoints) defines the wire basket $\mathcal{W}^{(i)}:=\bigcup_{j,k}\closure{\lambda}^{(i,j,k)}$, which is visualized in \autoref{fig:subGs_a}.

For the local (control) meshes, we introduce graphs again. Analogous to the previous notation, $\graph{}^{(i)} = \left(\nodeSet{}^{(i)},\edgeSet{}^{(i)}\right)$ denotes the (sub)graph formed by the nodes $\nodeSet{}^{(i)}$ and edges $\edgeSet{}^{(i)}$ related to the basis functions in \revised{$\mathbb{V}^{(i)}$}. Furthermore, $\graph{\Gamma}^{(i)} = \left(\nodeSet{\Gamma}^{(i)},\edgeSet{\Gamma}^{(i)}\right)$ and $\graph{\lambda}^{(i)} = \left(\nodeSet{\lambda}^{(i)},\edgeSet{\lambda}^{(i)}\right)$ are the graphs for the basis functions on the boundaries and local wire baskets, respectively. Note that these subdomain-wise definitions can be extended to the global domain, i.e. $\graph{\Gamma}$ and $\graph{\lambda}$, by creating unions due to the consistency of the decomposition. A visualization of these local and global subgraphs is provided in \autoref{fig:subGs_b} and \autoref{fig:subGs_c}.

\begin{figure}
    \begin{subfigure}[b]{0.33\textwidth}
        \centering
        \includegraphics[keepaspectratio]{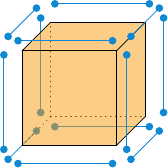}
        \caption{Local wire basket.}
        \label{fig:subGs_a}
    \end{subfigure}
    \begin{subfigure}[b]{0.33\textwidth}
        \centering
        \includegraphics[keepaspectratio]{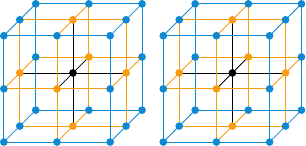}
        \caption{Local subgraphs.}
        \label{fig:subGs_b}
    \end{subfigure}
    \begin{subfigure}[b]{0.33\textwidth}
        \centering
        \includegraphics[keepaspectratio]{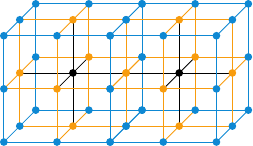}
        \caption{Global subgraphs.}
        \label{fig:subGs_c}
    \end{subfigure}
    \caption{Visualization of wire basket (blue) and $\partial\Omega^{(i)}$ (orange) for one hexahedral subdomain in \autoref{fig:subGs_a}. In \autoref{fig:subGs_b} and \autoref{fig:subGs_c}, the corresponding subgraphs for two connected subdomains are shown. There, blue indicates $\graph{\lambda}^{(i)}$ and $\graph{\lambda}$. Furthermore, orange together with blue indicates $\graph{\Gamma}^{(i)}$ and $\graph{\Gamma}$. The inner black nodes and edges belong to neither wire basket nor boundary graphs.}
    \label{fig:subGs}
\end{figure}

\subsubsection{The Local Discrete Problems and Their Coupling}
Introducing discrete, local vector potentials $\Afield^{(i)}_h \in \mathbb{V}^{(i)}$ on every subdomain $\Omega^{(i)}$ with $\Afield^{(i)}_h=\sum_{j=1}^{n_i}a^{(i)}_j\baseComp^{(i)}_j$ yields the uncoupled system of independent problems
\begin{equation}
    \underbrace{\begin{bmatrix}
        \tilde{\stiffMat}^{(1)} & & \\
        & \ddots & \\
        & & \tilde{\stiffMat}^{(N)} \\
    \end{bmatrix}}_{\eqqcolon\tilde{\stiffMat}}\underbrace{\begin{bmatrix}
        \tilde{\dofs}^{(1)} \\
        \vdots \\
        \tilde{\dofs}^{(N)} \\
    \end{bmatrix}}_{\eqqcolon\tilde{\dofs}}=\underbrace{\begin{bmatrix}
        \tilde{\rhs}^{(1)} \\
        \vdots \\
        \tilde{\rhs}^{(N)} \\
    \end{bmatrix}}_{\eqqcolon\tilde{\rhs}}\label{eq:locProbs}
\end{equation}
when applying the principles behind \eqref{eq:weakVecPot}-\eqref{eq:discVecPot} to every subdomain. \ma{Note that the linear system \eqref{eq:locProbs} only contains local information. In order to compute solutions that also solve \eqref{eq:discVecPot}, it is essential to provide transmission conditions (also called coupling constraints), see \cite{Quarteroni_1999aa}, in addition to prescribing the Dirichlet conditions.}
Due to the conforming decomposition of $\Omega$ and of \revised{the} space $\mathbb{V}$, \revised{the necessary} coupling constraints are defined as
\begin{equation}
    \tilde{\couplMat}\tilde{\dofs}=\zfield,
\end{equation}
where $\tilde{\couplMat}\in\{-1,0,1\}^{m\times n}$ is a signed Boolean matrix. Here, $n$ is given by $n=\sum_{i=1}^{N}n_i$ and each row of $\tilde{\couplMat}$ encodes a constraint of the form
\begin{equation}
    a_{i_j}^{(j)} - a_{i_k}^{(k)} = 0, \label{eq:coupling}
\end{equation}
which enforces tangential continuity of the discrete solutions $\Afield^{(i)}_h$ on all interfaces.
The \revised{arising $m$} constraints are then incorporated by using an approach based on Lagrange multipliers which we denote here as $\tilde{\mults}\in\mathbb{R}^{m}$. This results in
\begin{equation}
    \begin{bmatrix}
        \stiffMat & \couplMat\trans \\
        \couplMat & \mathbf{0} \\
    \end{bmatrix}\begin{bmatrix}
        \dofs \\
        \mults \\
    \end{bmatrix}=\begin{bmatrix}
        \rhs \\
        \mathbf{0} \\
    \end{bmatrix},\label{eq:TIproblem}
\end{equation}
where the Dirichlet boundary conditions are already imposed. The underlying \revised{elimination of Dirichlet DOFs, \cite[A.2]{Formaggia_2012aa}, affects the local systems again, i.e. $\stiffMat$, $\rhs$ and $\dofs$ are the modified versions of $\tilde{\stiffMat}$, $\tilde{\rhs}$ and $\tilde{\dofs}$ respectively. Additionally also the coupling constraints need to be modified.} At the intersection between the Dirichlet boundary parts and interfaces, the coupling \revised{\eqref{eq:coupling} between interface DOFs produces zero rows in the coupling matrix $\tilde{\couplMat}$} when Dirichlet DOFs are eliminated from the system. After removing the corresponding constraints and multipliers, we write $\couplMat$ and $\mults$ instead of $\tilde{\couplMat}$ and $\tilde{\mults}$, respectively.

    \section{Tree-Cotree Gauging for Tearing and Interconnecting}\label{sec:treeGraph}
    \revised{When considering the global problem \eqref{eq:discVecPot} typical gauging approaches, such as the tree-cotree gauging, can be straightforwardly applied to handle $\dimSpace{\ker\left(\stiffMat_{\mathrm{g}}\right)}>0$, which originates from \eqref{eq:kernel} and our choice of~$\mathbb{V}$ as explained in \cite{Kapidani_2022aa}. The difficulties arise if one wants to compute solutions to the local problems independently which requires gauging the matrix $\stiffMat$ from \eqref{eq:TIproblem}. The reason is that 
\begin{equation}
    \dimSpace{\ker\left(\stiffMat_{\mathrm{g}}\right)}\leq\dimSpace{\ker\left(\stiffMat\right)}\label{eq:gloLocKernel}
\end{equation}
holds. To gain insight into this relationship, one can draw an analogy with the Laplace problem. There, a constant $c_{\mathrm{g}}$ remains unspecified by the discrete problem with $\partial\Omega=\Gamma_{\mathrm{N}}$, while every subdomain block leaves one constant $c_{i}$ unspecified by itself. Due to the coupling constraints one implicitly enforces $c_{\mathrm{g}}=c_{1}=\ldots=c_{N}$ which implies that every local $c_{i}$ cannot be chosen freely but needs to be consistent with the global problem. Note that Dirichlet boundary conditions automatically fix $c_{\mathrm{g}}=c_{j}$ with $\partial\Omega^{(j)}\cap\Gamma_{\mathrm{D}}\neq\emptyset$ but that $c_{k}$ remains unspecified in the local context for ``floating'' subdomains with $\partial\Omega^{(k)}\cap\Gamma_{\mathrm{D}}=\emptyset$. In the $\mathrm{curl}$-$\mathrm{curl}$-context, we need to deal with local gradient fields (not local constants) in every $\Omega^{(i)}$ and their consistency to a global gradient field (not a global constant) instead. The underlying idea stays the same and we quantify that by relation \eqref{eq:gloLocKernel}.

Additional issues can arise due to edges which are shared by more than two subdomains. In analogy to the ``cross-points'' in nodal problems, we call them ``cross-edges'' and denote them as $\crossEdges$ in the global and as $\crossEdges^{(i)}$ in the local context. These introduce redundancy in the coupling, i.e. it is not possible to uniquely determine $\mults$ from \eqref{eq:TIproblem} because $\operatorname{ker}\left(\couplMat\trans\right)\neq\{\mathbf{0}\}$. As stated in \cite[Sec.~6.3.1]{Toselli_2005aa} or \cite[Thm.~3.2]{Benzi_2005aa} and the appended remark therein, this is not a problem if one introduces the restriction $\mults\in\operatorname{im}\left(\couplMat\right)$ and if $\ker\left(\stiffMat\right)\cap\ker\left(\couplMat\right)=\{\mathbf{0}\}$ holds. Some configurations of cross-edges do not satisfy the latter condition while other remain solvable, but in most literature the corresponding coupling constraints are modified in some way to obtain appropriate solutions \cite{Toselli_2005aa}. One elegant approach is the dual-primal TI variant in which selected constraints are eliminated by reducing each set of coupled DOFs in the selection to one DOF, \cite{Klawonn_2002aa,Farhat_2001aa}. In the nodal context, this approach has the additional advantage that the unspecified local constants in floating subdomains are fixed if the primal DOFs are selected appropriately.

This last additional advantage of the dual-primal ansatz for nodal problems motivates the following procedure in the $\mathrm{curl}$-$\mathrm{curl}$-context. First, construct an appropriate tree on $\graph{}$ that can be used to gauge the global problem. Then, tear the global tree into its local components on $\graph{}^{(i)}$ and show that combining these with the dual-primal idea enables local gauging, i.e. invertibility of the local operator for each subdomain. A more detailed description of tree-cotree gauging and our choice for the global tree is given in the following part of this section. Considerations and details on the dual-primal approach with tree-cotree gauging are provided in \autoref{sec:dualPrimal}.}

\subsection{Global Tree-Cotree Decomposition for Dual-Primal Approach}
\revised{For the following considerations, we need to classify the edges and nodes of $\graph{\Gamma}$ according to which boundary parts their respective basis functions contribute to. Starting in the local context, we write $\nodeSet{D}^{(i)}$ and $\edgeSet{D}^{(i)}$ for nodes and edges \ma{whose corresponding basis functions do not vanish on the Dirichlet boundary}. Similarly, $\nodeSet{N}^{(i)}$, $\edgeSet{N}^{(i)}$ for the Neumann parts and $\nodeSet{I}^{(i)}$, $\edgeSet{I}^{(i)}$ for nodes and edges with relation to interfaces. Note that their intersections are not necessarily empty. The global counterparts $\nodeSet{\star}$ and $\edgeSet{\star}$ for $\star\in\{\mathrm{D},\mathrm{I},\mathrm{N}\}$ are defined via the union of the respective local ones.

In tree-cotree gauging, one exploits the fact that it is possible to identify a decomposition of the DOFs into determined and undetermined ones from topological information of the mesh, i.e. $\graph{}$, alone. Edges belonging to a spanning tree on $\graph{}$ correspond to undetermined DOFs while the remaining (determined) ones are called cotree DOFs, \cite{Manges_1995aa,Albanese_1988ab}. We write $\edgeSet{t}\subset\edgeSet{}$ for the tree edges and $\edgeSet{c}=\edgeSet{}\Setminus\edgeSet{t}$ for the cotree ones in the global context. The gauge then consists of prescribing the tree DOFs and eliminating them from the system. We employ Kruskal's algorithm \cite{Kruskal_1956aa,Cormen_2001aa} with appropriate edge weights to grow such a spanning tree. The idea behind this algorithm works as follows: A not yet visited edge with the lowest weight is added to the tree if it does not close a loop. This is repeated until a stopping criterion, e.g. $\cardSet{\edgeSet{t}}=\cardSet{\nodeSet{}}-1$ for a connected graph $\graph{}$, is reached. Following \cite[Sec.~24.2]{Cormen_2001aa}, one can expect $\mathcal{O}\left(n_{\mathrm{g}}\log\left(n_{\mathrm{g}}\right)\right)$ with $n_{\mathrm{g}}=\cardSet{\edgeSet{}}$ for the duration of Kruskal's algorithm in an efficient implementation.

The elimination of Dirichlet boundary conditions conflicts with the elimination of tree DOFs because more DOFs than necessary are prescribed with a value. This leads to restrictions for the choice of the spanning tree. E.g. in \cite{Dular_1995aa}, it is described that the tree subset $\edgeSet{Dt}\coloneqq\edgeSet{D}\cap\edgeSet{t}$ needs to form a spanning tree on the subgraph $\graph{D}\coloneqq\left(\nodeSet{D},\edgeSet{D}\right)$ to consistently gauge \eqref{eq:discVecPot}. Such a tree can be obtained by constructing it on $\Gamma_{\mathrm{D}}$ first, before extending it into the remaining parts of $\Omega$, \cite{Dular_1995aa}. Kruskal's algorithm preserves this hierarchy by assigning the lowest weight to edges with relation to the Dirichlet boundary and assigning higher values to the remaining edges.

Taking care of the Dirichlet contribution correctly enables an appropriate gauge of \eqref{eq:discVecPot}, but increasing the number of hierarchy levels, i.e. weights in Kruskal's algorithm, provides good properties which we can exploit in the local context. We propose to grow the tree on the wire basket graph $\graph{\lambda}$ first, before extending it into the facets, represented by $\graph{\Gamma}$, and after that into the volume, i.e. covering the full $\graph{}$. This reduces complexity and allows for parallelism since the treatment of facets and volumes is automatically independent. More importantly, we can state and prove \autoref{thm:constCotree} for the cotree edges on the wire basket $\edgeSet{\lambda c}\coloneqq\edgeSet{\lambda}\Setminus\edgeSet{t}$.
\begin{theorem}\label{thm:constCotree}
    Let $\edgeSet{t}$ be a set of edges which form a spanning tree on the subgraph $\graph{\lambda}$ of the wire basket. Then, $\cardSet{\edgeSet{\lambda c}}$ is independent of the mesh size $h$ and degree $p$, i.e. of $\cardSet{\edgeSet{}}$, for a given decomposition of $\Omega$.
\end{theorem}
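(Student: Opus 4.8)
The plan is to reduce the statement to the computation of a purely combinatorial (in fact topological) invariant of the wire basket. By hypothesis $\edgeSet{t}\cap\edgeSet{\lambda}$ is a spanning forest of $\graph{\lambda}$ (a spanning tree if $\graph{\lambda}$ is connected), so it contains exactly $\cardSet{\nodeSet{\lambda}}-c_\lambda$ edges, where $c_\lambda$ is the number of connected components of $\graph{\lambda}$. Since $\edgeSet{\lambda c}=\edgeSet{\lambda}\Setminus\edgeSet{t}=\edgeSet{\lambda}\Setminus(\edgeSet{t}\cap\edgeSet{\lambda})$, this gives
\begin{equation*}
    \cardSet{\edgeSet{\lambda c}} = \cardSet{\edgeSet{\lambda}} - \cardSet{\edgeSet{t}\cap\edgeSet{\lambda}} = \cardSet{\edgeSet{\lambda}} - \cardSet{\nodeSet{\lambda}} + c_\lambda ,
\end{equation*}
which is precisely the circuit rank (first Betti number) of the graph $\graph{\lambda}$. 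Hence it suffices to show that this number depends only on the decomposition of $\Omega$ and not on $h$ or $p$, even though $\cardSet{\edgeSet{\lambda}}$ and $\cardSet{\nodeSet{\lambda}}$ do depend on them.

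Next I would identify $\graph{\lambda}$ as a \emph{subdivision} of the ``coarse'' wire-basket graph $(\mathcal{V}_0,\mathcal{E}_0)$ whose nodes $\mathcal{V}_0$ are the patch corners and whose edges $\mathcal{E}_0$ are the patch-edges $\closure{\lambda}^{(i,j,k)}$; this coarse graph is a cellular decomposition of $\mathcal{W}=\bigcup_i\mathcal{W}^{(i)}$ and manifestly depends only on the decomposition of $\Omega$. The only non-trivial point is to verify the subdivision claim along a single patch-edge. Using \eqref{eq:iga_space1}--\eqref{eq:iga_space2} and the one-to-one correspondence between basis functions and control-mesh edges, the basis functions of $\mathbb{V}^{(i)}$ whose control-mesh edge lies on a given patch-edge are exactly those whose tangential trace is a univariate B-spline of degree $p-1$; on $n_{\mathrm{el}}$ elements this space has dimension $n_{\mathrm{el}}+p-1$, while the incident control-mesh nodes are in bijection with the corresponding nodal spline space of degree $p$, of dimension $n_{\mathrm{el}}+p$. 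Thus along each patch-edge the control mesh is a path with $n_{\mathrm{el}}+p$ nodes and $n_{\mathrm{el}}+p-1$ edges, i.e. a subdivision of one coarse edge, and at the patch corners these paths glue exactly as the coarse edges do because the corner control points are shared across patches.

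Since subdividing a graph changes neither its number of connected components nor its circuit rank, I would conclude
\begin{equation*}
    \cardSet{\edgeSet{\lambda c}} = \cardSet{\edgeSet{\lambda}} - \cardSet{\nodeSet{\lambda}} + c_\lambda = \cardSet{\mathcal{E}_0} - \cardSet{\mathcal{V}_0} + c_0 ,
\end{equation*}
where $c_0$ is the number of connected components of the coarse wire-basket graph. The right-hand side involves only the finitely many patch-edges, patch corners and their connectivity, hence depends only on the chosen decomposition of $\Omega$ into patches, which proves the theorem. Equivalently, one may argue that the control mesh realizes a cellular decomposition of a space homeomorphic to $\mathcal{W}$, so that $c_\lambda=b_0(\mathcal{W})$ and $\cardSet{\nodeSet{\lambda}}-\cardSet{\edgeSet{\lambda}}=\chi(\mathcal{W})$ are homotopy invariants, and $\cardSet{\edgeSet{\lambda c}}=b_0(\mathcal{W})-\chi(\mathcal{W})=b_1(\mathcal{W})$.

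The main obstacle I expect is the second step in the higher-order case $p>1$: for $p=1$ the control mesh literally coincides with the lowest-order Nédélec mesh and the subdivision claim is immediate, but for $p>1$ one has to invoke the structure of the isogeometric edge-element spaces and the identification of basis functions with control-mesh edges to make sure that (a) restricted to each patch-edge one really obtains a simple path whose node and edge counts differ by exactly one regardless of $p$ and $n_{\mathrm{el}}$, and (b) at the patch corners no spurious nodes or edges appear and no extra identifications occur, so that the global control mesh is a genuine subdivision of $(\mathcal{V}_0,\mathcal{E}_0)$ rather than merely homotopy equivalent to $\mathcal{W}$. Everything else is elementary graph theory.
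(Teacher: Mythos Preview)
Your argument is correct and shares its core observation with the paper: the wire-basket graph $\graph{\lambda}$ is obtained from the coarse patch-edge graph by subdividing each patch edge, so the number of independent cycles is unchanged by $h$ and $p$. The packaging differs, however. You stay entirely at the one-dimensional level and invoke the elementary fact that subdivision preserves the circuit rank $\cardSet{\edgeSet{\lambda}}-\cardSet{\nodeSet{\lambda}}+c_\lambda$, which is all that the theorem requires. The paper instead embeds the subdivided graph into the full three-dimensional CW-complex of the patch decomposition and applies Euler's formula to obtain the explicit expression $\cardSet{\edgeSet{\lambda c}}=\cardSet{\faceSet{dec}}-\cardSet{\cellSet{dec}}-c_1+1$; this extra step is not needed for the theorem itself but is what drives the subsequent corollary $\cardSet{\edgeSet{\lambda c}}=\mathcal{O}(N)$. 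Conversely, the paper simply asserts that $\graph{\lambda}$ arises by ``subdividing every edge appropriately'', whereas you give the spline-dimension count that actually justifies this for $p>1$. Both routes are valid; yours is more self-contained for the stated result, the paper's yields a formula that is reused downstream.
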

\begin{proof}
    We can consider the decomposition of $\Omega$ into patches as a coarse mesh, or more precisely a CW-complex, where each patch corresponds to one hexahedral cell. We respectively denote by $\nodeSet{dec}$, $\edgeSet{dec}$, $\faceSet{dec}$ and $\cellSet{dec}$ the sets of nodes, edges, faces and cells for this mesh. The corresponding graph is denoted as $\graph{dec}\coloneqq\left(\nodeSet{dec},\edgeSet{dec}\right)$. Using Euler's characteristic formula \cite[Thm.~2.44]{Hatcher_2001aa}, we have
    \begin{equation}
        \cardSet{\nodeSet{dec}} - \cardSet{\edgeSet{dec}} + \cardSet{\faceSet{dec}} - \cardSet{\cellSet{dec}} = c_1 \in \mathbb{Z},\label{eq:eulerID_1}
    \end{equation}
    where the constant $c_1$ depends on the topology of the domain, but is independent of the decomposition into subdomains. Note that we can obtain $\graph{\lambda}$ by subdividing every edge of $\graph{dec}$ appropriately. Consequently, we can interpret $\graph{\lambda}$ as the edges and vertices of a CW-complex of the same domain, with the same faces $\faceSet{dec}$ and cells $\cellSet{dec}$, and the same topological invariant $c_1$. Furthermore, we know that $\graph{\lambda}$ is a connected graph, implying $\cardSet{\edgeSet{\lambda t}}=\cardSet{\nodeSet{\lambda}}-1$ for a spanning tree on the wire basket. Consequently, we obtain
    \begin{equation}
        \cardSet{\edgeSet{\lambda c}}=\cardSet{\edgeSet{\lambda}}-\cardSet{\edgeSet{\lambda t}}=\cardSet{\edgeSet{\lambda}}-\cardSet{\nodeSet{\lambda}}+1
        = \cardSet{\faceSet{dec}} - \cardSet{\cellSet{dec}} - c_1 + 1,\label{eq:cotreeWireframe}
    \end{equation}
    due to \eqref{eq:eulerID_1}. This proves \autoref{thm:constCotree} because $\cardSet{\edgeSet{\lambda c}}$ only depends on the topology and the given decomposition of $\Omega$ but not on $\cardSet{\edgeSet{}}$, i.e. mesh size $h$ and degree $p$.
\end{proof}
\autoref{thm:constCotree} can be applied to more general settings than the IGA multipatch decomposition of $\Omega$ but we neglect a detailed discussion. As a consequence of \autoref{thm:constCotree} and its proof, we can state \autoref{cor:linearCompl} as an extension.
\begin{corollary}\label{cor:linearCompl}
    If the decomposition of $\Omega$ satisfies $\cardSet{\faceSet{dec}}=\mathcal{O}(N)$, then $\cardSet{\edgeSet{\lambda c}}=\mathcal{O}(N)$ for the number of subdomains $N$.
\end{corollary}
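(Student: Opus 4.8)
The plan is to read the statement off directly from the closed-form expression obtained inside the proof of \autoref{thm:constCotree}. Equation \eqref{eq:cotreeWireframe} already gives
\begin{equation*}
    \cardSet{\edgeSet{\lambda c}} = \cardSet{\faceSet{dec}} - \cardSet{\cellSet{dec}} - c_1 + 1,
\end{equation*}
so the corollary is essentially a bookkeeping consequence. First I would observe that, by construction of the coarse CW-complex, each cell is exactly one patch/subdomain, hence $\cardSet{\cellSet{dec}} = N$. Second, I would recall that $c_1$ is the topological invariant appearing in Euler's formula \eqref{eq:eulerID_1}; as noted there, it depends only on $\Omega$ and not on how $\Omega$ is partitioned, so it is a constant with respect to $N$ (for the model problem with $\Omega$ simply connected one in fact has $c_1=1$, but the precise value is irrelevant here).

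Substituting these two facts together with the hypothesis $\cardSet{\faceSet{dec}} = \mathcal{O}(N)$ yields
\begin{equation*}
    0 \le \cardSet{\edgeSet{\lambda c}} = \cardSet{\faceSet{dec}} - N - c_1 + 1 \le \cardSet{\faceSet{dec}} + |c_1| + 1 = \mathcal{O}(N),
\end{equation*}
where the left-hand inequality merely records that $\cardSet{\edgeSet{\lambda c}}$ is a cardinality and the right-hand side collects the $N$-independent terms into the implied constant. This is precisely $\cardSet{\edgeSet{\lambda c}} = \mathcal{O}(N)$, which proves \autoref{cor:linearCompl}.

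I do not expect a genuine obstacle: the only point that deserves an explicit sentence is that $c_1$ and the additive $+1$ are truly independent of $N$, so that they are absorbed into the $\mathcal{O}(N)$ bound — and this is already guaranteed by the discussion following \eqref{eq:eulerID_1}. If one wishes to add context, it is worth remarking that the hypothesis $\cardSet{\faceSet{dec}} = \mathcal{O}(N)$ is the natural one for shape-regular multipatch decompositions, since each patch contributes a bounded number of coarse faces and every interior face is shared by at most two patches; hence in the IGA setting the assumption holds automatically with a small constant, and the corollary applies.
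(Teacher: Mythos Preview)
Your proposal is correct and follows essentially the same route as the paper: both use the identity \eqref{eq:cotreeWireframe} from the proof of \autoref{thm:constCotree}, observe that $\cardSet{\cellSet{dec}}=N$ and that $c_1$ is independent of $N$, and then read off $\cardSet{\edgeSet{\lambda c}}=\mathcal{O}(N)$ from the hypothesis $\cardSet{\faceSet{dec}}=\mathcal{O}(N)$. Your closing remark about the IGA multipatch setting automatically satisfying the hypothesis is likewise the same observation the paper makes immediately after the corollary.
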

\begin{proof}
    Clearly, we have that $\cardSet{\cellSet{dec}}=N$ and that $c_1$ is independent of $N$. Hence, $\cardSet{\edgeSet{\lambda c}}=\mathcal{O}(N)$ follows from \eqref{eq:cotreeWireframe} if $\cardSet{\faceSet{dec}}=\mathcal{O}(N)$.
\end{proof}
In the IGA multipatch setting, we know that every patch, i.e. every subdomain, has six faces. The sum of all patches counts every interface twice such that $0<\cardSet{\faceSet{dec}}=6N-n_{\mathrm{intf}}=\mathcal{O}(N)$ because $n_{\mathrm{intf}}<6N$ holds for the number of interfaces $n_{\mathrm{intf}}$. Therefore, we know that
\begin{equation}
    \cardSet{\edgeSet{\lambda c}}=\mathcal{O}(N)\label{eq:linCompl}
\end{equation}
holds for the IGA multipatch decomposition of $\Omega$ due to \autoref{cor:linearCompl}.

Let us address the question whether a spanning tree $\edgeSet{t}$ which provides also a spanning tree $\edgeSet{\lambda t}$ when restricted to the wire basket $\graph{\lambda}$ can be constructed to also provide a spanning tree $\edgeSet{Dt}$ when restricted to the Dirichlet boundary $\graph{D}$. To answer this it is necessary to introduce a classification of different wire basket subsets depending on which facet type they contribute to. First, we denote the intersections of different facet types as $\edgeSet{DI}\coloneqq\edgeSet{D}\cap\edgeSet{I}$, $\edgeSet{DN}\coloneqq\edgeSet{D}\cap\edgeSet{N}$ and $\edgeSet{NI}\coloneqq\edgeSet{N}\cap\edgeSet{I}$ respectively. Then, if edges are shared by two facets of the same type, we write $\edgeSet{DD}\coloneqq\left(\edgeSet{\lambda}\cap\edgeSet{D}\right)\Setminus\left(\edgeSet{DI}\cup\edgeSet{DN}\right)$, $\edgeSet{NN}\coloneqq\left(\edgeSet{\lambda}\cap\edgeSet{N}\right)\Setminus\left(\edgeSet{DN}\cup\edgeSet{NI}\right)$ and $\edgeSet{II}\coloneqq\left(\edgeSet{\lambda}\cap\edgeSet{I}\right)\Setminus\left(\edgeSet{DI}\cup\edgeSet{NI}\right)$, respectively. Altogether, the relations
\begin{equation*}
    \edgeSet{\lambda}=\edgeSet{DI}\cup\edgeSet{DN}\cup\edgeSet{NI}\cup\edgeSet{II}\cup\edgeSet{DD}\cup\edgeSet{NN},\quad\edgeSet{\Gamma}=\edgeSet{D}\cup\edgeSet{N}\cup\edgeSet{I}\quad\text{and}\quad\edgeSet{\lambda}\subseteq\edgeSet{\Gamma}
\end{equation*}
hold. Note that the corresponding node subsets, e.g. $\nodeSet{NN}$, are considered to be defined similarly. Through introducing a hierarchy on growing the tree first on $\edgeSet{DI}$, then $\edgeSet{DN}$ and at last to $\edgeSet{NI}$, we preserve the ``Dirichlet first'' condition on the wire basket and the remaining tree can be grown into the facets independently. After that, we extend the tree to $\edgeSet{DD}$, $\edgeSet{NN}$ and $\edgeSet{II}$ to obtain that $\edgeSet{\lambda t}$ is a \revised{spanning} tree on $\graph{\lambda}$ such that \autoref{thm:constCotree} can be employed. Note that $\edgeSet{II}$, $\edgeSet{DD}$ and $\edgeSet{NN}$ can be handled independently of each other and that we assign different weights only for convenience. The remaining steps consist of extending the tree to $\graph{\Gamma}$ and then to $\graph{}$. This hierarchy can be obtained by employing Kruskal's algorithm with the presented weights in \autoref{fig:kruskalWeights}. Note that it is possible to parallelize the extension into certain subgraphs such that the general complexity $\mathcal{O}\left(n_{\mathrm{g}}\log\left(n_{\mathrm{g}}\right)\right)$ can be improved. We have visualized the weights and the resulting tree for an example configuration in \autoref{fig:mixedCube_loc}.}

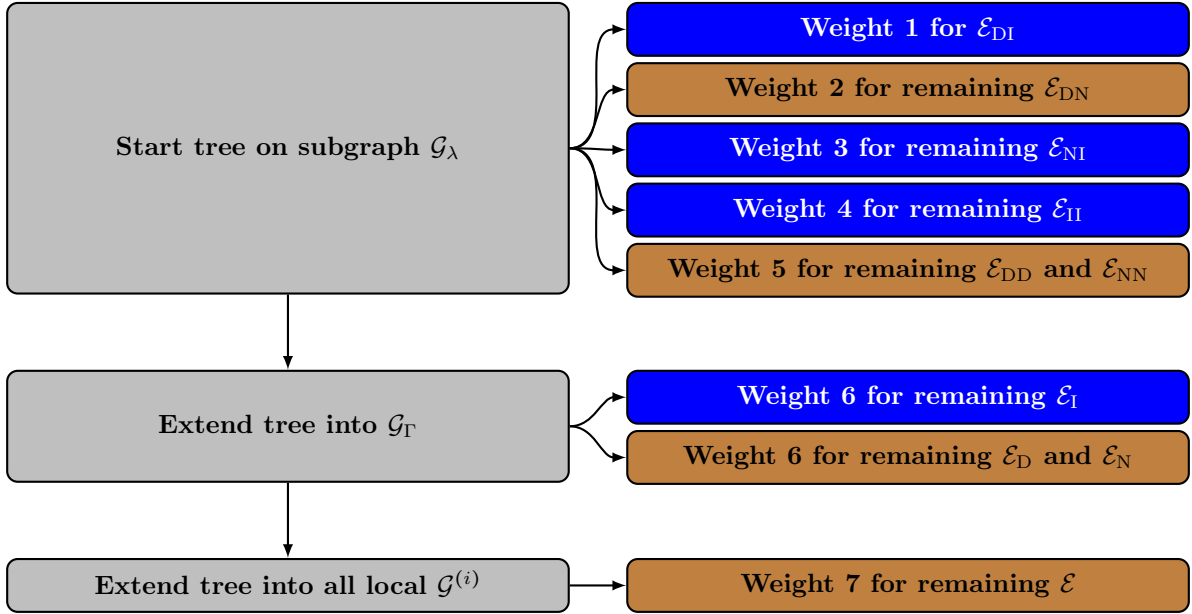
\begin{figure}
    \centering
    \begin{tikzpicture}[>=latex]
        \tikzset{textBox/.style={rectangle, rounded corners, draw, thick, minimum width=0.45\linewidth, inner sep=0pt, font={\bf}}}
        
        \node[textBox,fill=lightgray,minimum height=11em](n1) at (0,0){Start tree on subgraph $\graph{\lambda}$};
        \node[textBox,fill=lightgray,minimum height=4.2em,  anchor=north](n2) at ($(n1.south)-(0,1)$){Extend tree into $\graph{\Gamma}$};
        \node[textBox,fill=lightgray,minimum height=2em,anchor=north](n3) at ($(n2.south)-(0,1)$){Extend tree into all local $\graph{}^{(i)}$};

        \node[textBox,fill=blue,text=white,minimum height=2em,anchor=north west](n4) at ($(n1.north east)+(0.75,0)$){Weight 1 for $\mathcal{E}_{\mathrm{DI}}$};
        \node[textBox,fill=brown,text=black,minimum height=2em,anchor=north](n5) at ($(n4.south)-(0em,0.2em)$){Weight 2 for remaining $\mathcal{E}_{\mathrm{DN}}$};
        \node[textBox,fill=blue,text=white,minimum height=2em,anchor=north](n6) at ($(n5.south)-(0em,0.2em)$){Weight 3 for remaining $\mathcal{E}_{\mathrm{NI}}$};
        \node[textBox,fill=blue,text=white,minimum height=2em,anchor=north](n7) at ($(n6.south)-(0em,0.2em)$){Weight 4 for remaining $\edgeSet{II}$};
        \node[textBox,fill=brown,minimum height=2em,anchor=north](n8) at ($(n7.south)-(0em,0.2em)$){Weight 5 for remaining $\edgeSet{DD}$ and $\edgeSet{NN}$};

        \node[textBox,fill=blue,text=white,minimum height=2em,anchor=north west](n9) at ($(n2.north east)+(0.75,0)$){Weight 6 for remaining $\edgeSet{I}$};
        \node[textBox,fill=brown,minimum height=2em,anchor=north](n10) at ($(n9.south)-(0em,0.2em)$){Weight 6 for remaining $\edgeSet{D}$ and $\edgeSet{N}$};

        \node[textBox,fill=brown,minimum height=2em,anchor=west](n11) at ($(n3.east)+(0.75,0)$){Weight 7 for remaining $\edgeSet{}$};

        \draw[->,thick] (n1.south) -- (n2.north);
        \draw[->,thick] (n2.south) -- (n3.north);
        
        \draw[->,thick] (n1.east) to[in=180,out=0] (n4.west);
        \draw[->,thick] (n1.east) to[in=180,out=0] (n5.west);
        \draw[->,thick] (n1.east) to[in=180,out=0] (n6.west);
        \draw[->,thick] (n1.east) to[in=180,out=0] (n7.west);
        \draw[->,thick] (n1.east) to[in=180,out=0] (n8.west);

        \draw[->,thick] (n2.east) to[in=180,out=0] (n9.west);
        \draw[->,thick] (n2.east) to[in=180,out=0] (n10.west);

        \draw[->,thick] (n3.east) to[in=180,out=0] (n11.west);
    \end{tikzpicture}
    \caption{Weight distribution for Kruskal's algorithm to generate a consistent tree. Tree generation for brown boxes can be parallelized for every subdomain. The blue boxes need global context and no parallelization is possible.}
    \label{fig:kruskalWeights}
\end{figure}

\begin{figure}
    \centering
    \includegraphics[keepaspectratio,width=0.95\linewidth]{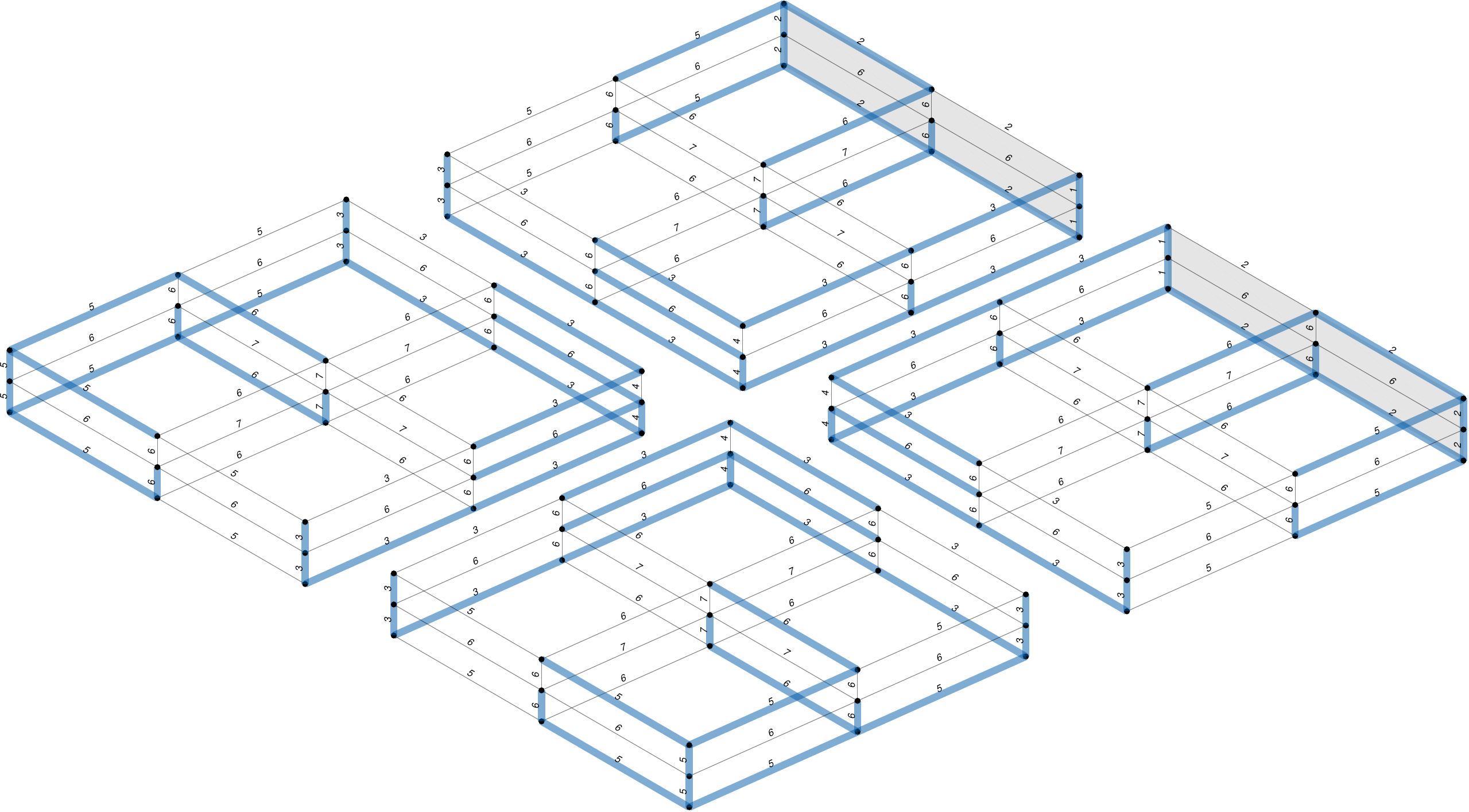}
    \caption{Local graphs with prioritization, tree edges (blue) and Dirichlet sides marked in gray.}
    \label{fig:mixedCube_loc}
\end{figure}

    \section{Dual-Primal Formulation for 3D Magnetostatics}\label{sec:dualPrimal}
    \revised{In this section, we derive and explain the typical dual-primal approach in \autoref{subsec:dpTI}. This method requires selecting appropriate, constrained DOFs for which continuity is enforced by reducing them to only one DOF. These are denoted as `primal' and we present such a choice in \autoref{subsec:locInv}. Additionally, we prove that our choice satisfies all requirements to obtain a scalable TI method.}

\subsection{Dual-Primal Tearing and Interconnecting}\label{subsec:dpTI}
Starting with \eqref{eq:TIproblem}, where the Dirichlet boundary is already eliminated, we further split $\dofs$ into tree DOFs $\dofs_{\mathrm{t}}$, primal DOFs $\dofs_{\mathrm{p}}$ and the remaining ones $\dofs_{\mathrm{r}}$. \revised{We write $n_{\mathrm{p}}$ for the number of the primal DOFs which depends on the, for now unspecified, selection. Incorporating $\dofs_{\mathrm{t}}=\mathbf{0}$ and reordering the equations leads to
\begin{equation}
    \begin{bmatrix}
        \stiffMat_{\mathrm{rr}} & \stiffMat_{\mathrm{rp}} & \bar{\couplMat}_{\mathrm{r}}\trans  \\
        \stiffMat_{\mathrm{pr}} & \stiffMat_{\mathrm{pp}} & \bar{\couplMat}_{\mathrm{p}}\trans \\
        \bar{\couplMat}_{\mathrm{r}} & \bar{\couplMat}_{\mathrm{p}} & \mathbf{0} \\
    \end{bmatrix}\begin{bmatrix}
        \dofs_{\mathrm{r}} \\
        \dofs_{\mathrm{p}} \\
        \bar{\mults} \\
    \end{bmatrix}=\begin{bmatrix}
        \rhs_{\mathrm{r}} \\
        \rhs_{\mathrm{p}} \\
        \mathbf{0} \\
    \end{bmatrix}.\label{eq:intermedDP}
\end{equation}
Note that removing tree-DOFs is similar to eliminating Dirichlet DOFs in the sense that zero rows are removed again to obtain the reduced coupling matrix $\bar{\couplMat}=\begin{bmatrix}\bar{\couplMat}_{\mathrm{r}} & \bar{\couplMat}_{\mathrm{p}}\end{bmatrix}$. Hence, we write $\bar{\couplMat}_{\mathrm{p}}$ and $\bar{\couplMat}_{\mathrm{r}}$ for the respective primal and remaining modified coupling constraints as well as $\bar{\mults}$ for the multipliers. We define a matrix $\primMat\in\{0,1\}^{n_{\mathrm{p}}\times \nCoarse}$ which is the kernel representation of $\bar{\couplMat}_{\mathrm{p}}$, i.e. $\operatorname{ker}\left(\bar{\couplMat}_{\mathrm{p}}\right)=\operatorname{im}\left(\primMat\right)$ with $\nCoarse=\dimSpace{\operatorname{ker}\left(\bar{\couplMat}_{\mathrm{p}}\right)}$. Substituting $\dofs_{\mathrm{p}}=\primMat\mathbf{p}$ in \eqref{eq:intermedDP}, removing the constraints associated with the zero rows in $\bar{\couplMat}_{\mathrm{r}}$ and multiplying the second line of \eqref{eq:intermedDP} with $\primMat$ yields}
\begin{equation}
    \begin{bmatrix}
        \stiffMat_{\mathrm{rr}} & \stiffMat_{\mathrm{rp}}\primMat & \couplMat_{\mathrm{r}}\trans  \\
        \primMat\trans\stiffMat_{\mathrm{pr}} & \primMat\trans\stiffMat_{\mathrm{pp}}\primMat & \mathbf{0} \\
        \couplMat_{\mathrm{r}} & \mathbf{0} & \mathbf{0} \\
    \end{bmatrix}\begin{bmatrix}
        \dofs_{\mathrm{r}} \\
        \mathbf{p} \\
        \mults_{\mathrm{r}} \\
    \end{bmatrix}=\begin{bmatrix}
        \rhs_{\mathrm{r}} \\
        \primMat\trans\rhs_{\mathrm{p}} \\
        \mathbf{0} \\
    \end{bmatrix}\label{eq:gloDP}
\end{equation}
as the global dual-primal system with remaining multipliers $\mults_{\mathrm{r}}\in\mathbb{R}^{m_{\mathrm{r}}}$. \revised{These steps effectively reduce each set of coupled DOFs in $\dofs_{\mathrm{p}}$ to one DOF in $\mathbf{p}$.}

To enable parallelism by Schur complements in a typical dual-primal TI method, one may require that 
\begin{equation*}
    \stiffMat_{\mathrm{rr}}=\operatorname{diag}\left(\stiffMat_{\mathrm{rr}}^{(1)},~\ldots~,~\stiffMat_{\mathrm{rr}}^{(N)}\right),
\end{equation*}
\ma{where every} local matrix $\stiffMat_{\mathrm{rr}}^{(i)}$ \revised{is invertible. In other words, fixing the tree DOFs and selecting appropriate primal DOFs needs to remove all local kernels.} Assuming that this holds, one can eliminate
\begin{equation}
    \dofs_{\mathrm{r}} = \stiffMat_{\mathrm{rr}}^{-1}\left(\rhs_{\mathrm{r}} - \stiffMat_{\mathrm{rp}}\primMat\mathbf{p} - \couplMat_{\mathrm{r}}\trans\mults_{\mathrm{r}} \right)\label{eq:recRem}
\end{equation}
in the first line of \eqref{eq:gloDP}. Inserting \eqref{eq:recRem} in the second and third line of \eqref{eq:gloDP} results in
\begin{equation}
    \begin{bmatrix}
        \primMat\trans\stiffMat_{\mathrm{pr}}\stiffMat_{\mathrm{rr}}^{-1}\stiffMat_{\mathrm{rp}}\primMat-\primMat\trans\stiffMat_{\mathrm{pp}}\primMat & \primMat\trans\stiffMat_{\mathrm{pr}}\stiffMat_{\mathrm{rr}}^{-1}\couplMat_{\mathrm{r}}\trans\\
        \couplMat_{\mathrm{r}}\stiffMat_{\mathrm{rr}}^{-1}\stiffMat_{\mathrm{rp}}\primMat & \couplMat_{\mathrm{r}}\stiffMat_{\mathrm{rr}}^{-1}\couplMat_{\mathrm{r}}\trans \\
    \end{bmatrix}\begin{bmatrix}
        \mathbf{p} \\
	    \mults_{\mathrm{r}} \\
    \end{bmatrix} = \begin{bmatrix}
        \primMat\trans\stiffMat_{\mathrm{pr}}\stiffMat_{\mathrm{rr}}^{-1}\rhs_{\mathrm{r}} - \primMat\trans\rhs_{\mathrm{p}} \\
        \couplMat_{\mathrm{r}}\stiffMat_{\mathrm{rr}}^{-1}\rhs_{\mathrm{r}}
    \end{bmatrix}.
\end{equation}
We shorten the notation by introducing 
\begin{align*}
    \coarseMat&\coloneqq\primMat\trans\stiffMat_{\mathrm{pr}}\stiffMat_{\mathrm{rr}}^{-1}\stiffMat_{\mathrm{rp}}\primMat-\primMat\trans\stiffMat_{\mathrm{pp}}\primMat, \\
    \mathbf{G}&\coloneqq\couplMat_{\mathrm{r}}\stiffMat_{\mathrm{rr}}^{-1}\stiffMat_{\mathrm{rp}}\primMat, \\
    \mathbf{W}&\coloneqq\couplMat_{\mathrm{r}}\stiffMat_{\mathrm{rr}}^{-1}\couplMat_{\mathrm{r}}\trans, \\
    \mathbf{d}&\coloneqq\primMat\trans\stiffMat_{\mathrm{pr}}\stiffMat_{\mathrm{rr}}^{-1}\rhs_{\mathrm{r}} - \primMat\trans\rhs_{\mathrm{p}},\\
    \mathbf{e}&\coloneqq\couplMat_{\mathrm{r}}\stiffMat_{\mathrm{rr}}^{-1}\rhs_{\mathrm{r}},
\end{align*}
which leads to
\begin{equation}
    \begin{bmatrix}
        \coarseMat & \mathbf{G}\trans \\
        \mathbf{G} & \mathbf{W} \\
    \end{bmatrix}\begin{bmatrix}
        \mathbf{p} \\
	    \mults_{\mathrm{r}} \\
    \end{bmatrix} = \begin{bmatrix}
        \mathbf{d} \\
        \mathbf{e} \\
    \end{bmatrix}.\label{eq:firstSchur}
\end{equation}
For this, we apply yet again a Schur complement to eliminate $\mathbf{p}$ and obtain
\begin{align}
    \interfaceMat\mults_{\mathrm{r}} = \left(\mathbf{G}\coarseMat^{-1}\mathbf{G}\trans-\mathbf{W}\right)\mults_{\mathrm{r}} &= \mathbf{G}\coarseMat^{-1}\mathbf{d} - \mathbf{e}\label{eq:intProb} \\
    \mathbf{a}_{\mathrm{p}} &= \primMat\coarseMat^{-1}\left(\mathbf{d} - \mathbf{G}\trans\mults_{\mathrm{r}}\right)\label{eq:coarseProb} \\
    \dofs_{\mathrm{r}} &= \stiffMat_{\mathrm{rr}}^{-1}\left(\rhs_{\mathrm{r}} - \stiffMat_{\mathrm{rp}}\mathbf{a}_{\mathrm{p}} - \couplMat_{\mathrm{r}}\trans\mults_{\mathrm{r}} \right).\label{eq:localProb}
\end{align}
Note that the inherent parallelism becomes evident when all matrices are split into local contributions. Details are omitted but an interested reader is referred to \cite{Farhat_2001aa}.

\revised{Typically, \eqref{eq:intProb} is solved using iterative methods such as the preconditioned conjugate gradient (PCG) method as described in, e.g. \cite{Farhat_2000aa,Farhat_2001aa}. The condition number of the interface problem has a major influence on the scalability of the method which motivates looking at appropriate preconditioners in more detail. We follow the approach of \citeauthor{Farhat_2000aa} in \cite{Farhat_2000aa} and investigate the lumped
\begin{equation}
    \mathbf{M}_{\mathrm{L}}^{-1}\coloneqq\mathbf{D}_{\mathrm{s}}\couplMat_{\mathrm{r_{I}}}\stiffMat_{\mathrm{r_{I}}\mathrm{r_{I}}}\couplMat_{\mathrm{r_{I}}}\trans\mathbf{D}_{\mathrm{s}}
\end{equation}
as well as the Dirichlet preconditioner
\begin{equation}
    \mathbf{M}_{\mathrm{D}}^{-1}\coloneqq\mathbf{D}_{\mathrm{s}}\couplMat_{\mathrm{r_{I}}}\interfaceMat_{\mathrm{r_{I}}\mathrm{r_{I}}}\couplMat_{\mathrm{r_{I}}}\trans\mathbf{D}_{\mathrm{s}}\quad\text{with}\quad\interfaceMat_{\mathrm{r_{I}}\mathrm{r_{I}}}\coloneqq\stiffMat_{\mathrm{r_{I}}\mathrm{r_{I}}} - \stiffMat_{\mathrm{r_{I}}\mathrm{r_{V}}}\stiffMat_{\mathrm{r_{V}}\mathrm{r_{V}}}^{-1}\stiffMat_{\mathrm{r_{V}}\mathrm{r_{I}}}
\end{equation}
in \autoref{subsec:scalab}. For both approaches, it is necessary to further split the remaining DOFs $\dofs_{\mathrm{r}}$ into $\dofs_{\mathrm{r_{I}}}$, those which belong to an interface, and $\dofs_{\mathrm{r_{V}}}$, those which are not constrained in any way. The diagonal scaling matrix $\mathbf{D}_{\mathrm{s}}$ is necessary to treat material jumps between subdomains, \cite{Farhat_2000aa,Klawonn_2002aa}. In our numerical examples, we focus on problems with homogeneous material and use the identity matrix $\mathbf{D}_{\mathrm{s}}=\mathbf{I}$. More general scaling matrices can be constructed by using ``weighted counting functions'' as explained in \cite{Klawonn_2002aa}.}

\subsection{Selection of Primal DOFs and Invertibility}\label{subsec:locInv}
In this subsection, we discuss the choice of appropriate primal DOFs. \revised{We know that the global problem \eqref{eq:discVecPot} is gauged appropriately because our tree is constructed to provide that. One can interpret \eqref{eq:discVecPot} as a special variant of \eqref{eq:gloDP} where all interface DOFs are selected as primal. In other words, after eliminating a globally consistent tree as constructed in \autoref{sec:treeGraph}, the system in \eqref{eq:gloDP} is solvable because no issues with cross-edges remain and global gradients are taken care of.} This implies that global invertibility of the system matrix in \eqref{eq:gloDP} only depends on our choice of primal DOFs $\dofs_{\mathrm{p}}$. \revised{The local contribution $\dofs_{\mathrm{p}}^{(i)}$ of subdomain $\Omega^{(i)}$ corresponds to the set of local edges $\edgeSet{p}^{(i)}$. Its global counterpart (computed with union operation) is called $\edgeSet{p}$ and its edges correspond to the DOFs in $\mathbf{p}$. We have that $n_{\mathrm{p}}=\sum_{i=1}^{N}\cardSet{\edgeSet{p}^{(i)}}$ and $\nCoarse=\cardSet{\edgeSet{p}}$ hold for the respective number of DOFs/edges.}

\revised{Before defining $\edgeSet{p}^{(i)}$ and $\edgeSet{p}$ properly, we analyze which remaining issues our selection needs to fix. First, to obtain global invertibility of \eqref{eq:gloDP}, it is necessary to remove issues related to the cross-edges $\edgeSet{II}^{(i)}$. Due to setting all local tree DOFs to zero and the matching of tree edges on interfaces (originating from tearing the global tree apart), all issues from the cross-edges $\edgeSet{II}^{(i)}\cap\edgeSet{t}^{(i)}$ are already fixed. The remaining ones $\edgeSet{II}^{(i)}\Setminus\edgeSet{t}^{(i)}$ need to be selected as primal edges to obtain global invertibility. However, this does not suffice to obtain local invertibility, i.e. that every local matrix $\stiffMat_{\mathrm{rr}}^{(i)}$ is invertible. An example is provided in \autoref{fig:flatDD} where the matrices corresponding to subdomains in the middle and the lower left are not invertible. It can be seen easily because the underlying graph $\graph{e}^{(i)}\coloneqq\left(\nodeSet{}^{(i)},\edgeSet{e}^{(i)}\right)$, formed by all the nodes from $\graph{}^{(i)}$ and the locally eliminated edges
\begin{equation}
    \edgeSet{e}^{(i)}\coloneqq\edgeSet{D}^{(i)}\cup\edgeSet{t}^{(i)}\cup\edgeSet{p}^{(i)},
\end{equation}
is not connected. In other words, there is no path from every node in $\nodeSet{}^{(i)}$ to all other nodes in $\nodeSet{}^{(i)}$ which only consists of edges in $\edgeSet{e}^{(i)}$ for the affected subdomains. We formalize this in \autoref{lem:locInv}.}

\begin{figure}
    \centering
    \includegraphics[keepaspectratio,width=0.95\linewidth]{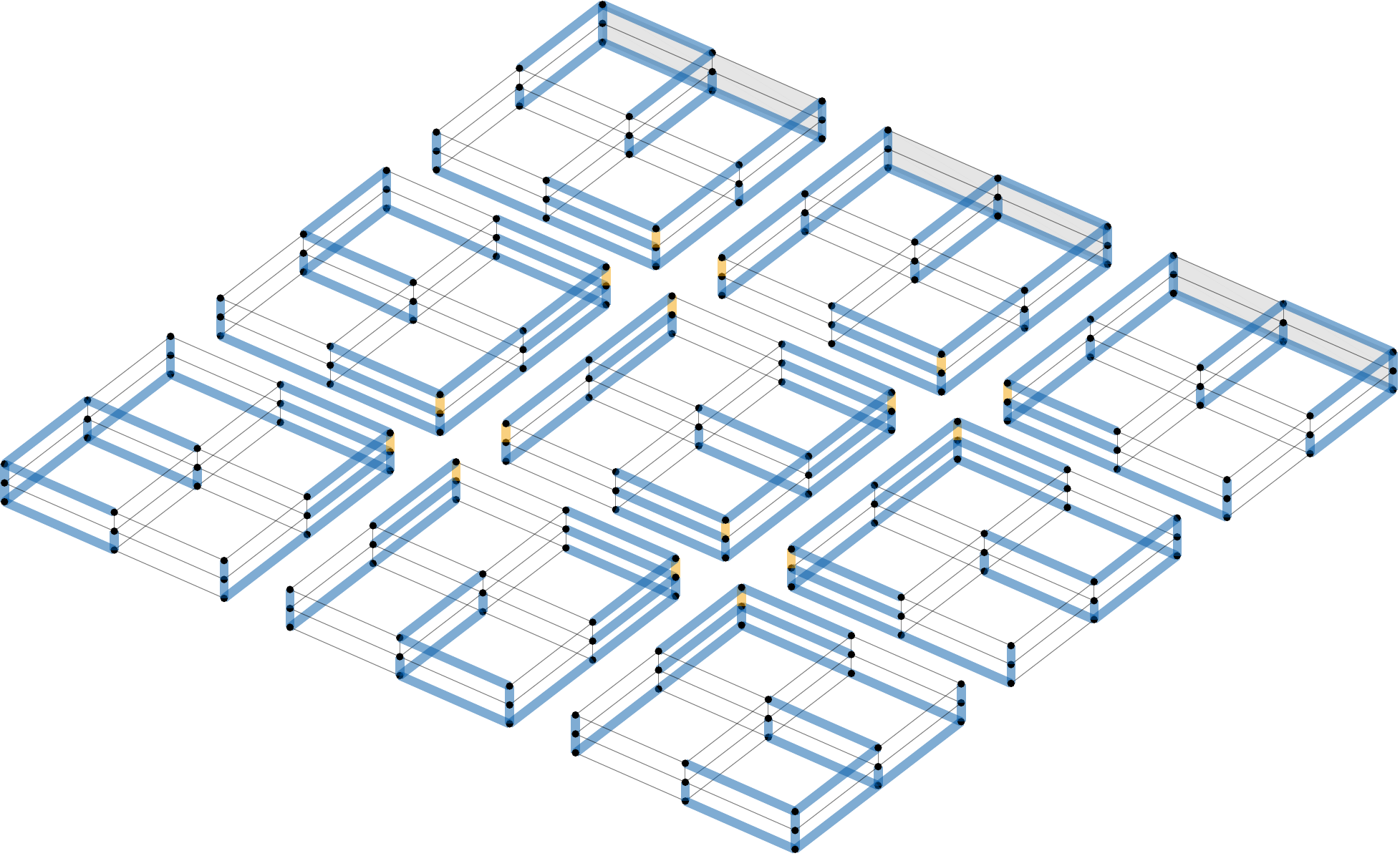}
    \caption{Local graphs for a problematic configuration with tree edges (blue), cross-edges (orange) and Dirichlet facets marked in gray.}
    \label{fig:flatDD}
\end{figure}
\revised{
\begin{lemma}\label{lem:locInv}
    The matrix $\stiffMat_{\mathrm{rr}}^{(i)}$ is invertible if the graph of locally eliminated DOFs $\graph{e}^{(i)}=\left(\nodeSet{}^{(i)},\edgeSet{e}^{(i)}\right)$ is connected.
\end{lemma}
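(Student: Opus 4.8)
The plan is to prove that $\stiffMat_{\mathrm{rr}}^{(i)}$ is symmetric positive definite, which for this matrix is the same as invertibility. The key structural observation is that $\stiffMat_{\mathrm{rr}}^{(i)}$ is a principal submatrix of the full local stiffness matrix $\tilde{\stiffMat}^{(i)}$ from \eqref{eq:locProbs}, and the latter is symmetric positive semi-definite, since its entries are $\volPairing{\Omega^{(i)}}{\reluctivity\left(\Curl\baseComp_k^{(i)}\right)\cdot\left(\Curl\baseComp_l^{(i)}\right)}$ and $\reluctivity>0$. More precisely, after eliminating the Dirichlet DOFs $\edgeSet{D}^{(i)}$ (with $\gDfield=0$), setting the tree DOFs $\edgeSet{t}^{(i)}$ to zero, and separating off the primal DOFs $\edgeSet{p}^{(i)}$, the block $\stiffMat_{\mathrm{rr}}^{(i)}$ is exactly the principal submatrix of $\tilde{\stiffMat}^{(i)}$ indexed by $\edgeSet{r}^{(i)}\coloneqq\edgeSet{}^{(i)}\Setminus\edgeSet{e}^{(i)}$. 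A principal submatrix of a symmetric positive semi-definite matrix is positive definite if and only if $\operatorname{ker}\big(\tilde{\stiffMat}^{(i)}\big)$ contains no non-zero vector supported on $\edgeSet{r}^{(i)}$ (i.e.\ vanishing on $\edgeSet{e}^{(i)}$); this follows by padding a vector $\mathbf{x}$ with zeros on $\edgeSet{e}^{(i)}$ to $\hat{\mathbf{x}}$ and using $\mathbf{x}\trans\stiffMat_{\mathrm{rr}}^{(i)}\mathbf{x}=\hat{\mathbf{x}}\trans\tilde{\stiffMat}^{(i)}\hat{\mathbf{x}}$ together with positive semi-definiteness.

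Thus the task reduces to showing that any vector in $\operatorname{ker}\big(\tilde{\stiffMat}^{(i)}\big)$ that vanishes on all of $\edgeSet{e}^{(i)}$ is zero. Here I would invoke the discrete de~Rham structure of the spline spaces: since each patch $\Omega^{(i)}$ is a regular image of the unit cube and hence contractible, the spline complex is exact, so $\operatorname{ker}(\Curl)\cap\mathbb{V}^{(i)}=\Grad\big(S_p^0(\Omega^{(i)})\big)$ with no harmonic contribution, and $\operatorname{ker}\big(\tilde{\stiffMat}^{(i)}\big)$ is precisely the set of coefficient vectors of these discrete gradient fields. By the control-mesh identification of \cite{Kapidani_2022aa,Kapidani_2024aa}, such a coefficient vector arises from a scalar node function $\phi\colon\nodeSet{}^{(i)}\to\mathbb{R}$ whose value on each edge $e\in\edgeSet{}^{(i)}$ equals the difference of $\phi$ at the two endpoints of $e$; equivalently, $\operatorname{ker}\big(\tilde{\stiffMat}^{(i)}\big)$ is the range of the transposed node--edge incidence matrix of $\graph{}^{(i)}$.

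The remaining step is then purely graph-theoretic. The coefficient vector associated with $\phi$ vanishes on an edge $e$ precisely when $\phi$ takes the same value at the two endpoints of $e$, so it vanishes on all of $\edgeSet{e}^{(i)}$ if and only if $\phi$ is constant on every connected component of $\graph{e}^{(i)}=\left(\nodeSet{}^{(i)},\edgeSet{e}^{(i)}\right)$. If $\graph{e}^{(i)}$ is connected, $\phi$ must be constant on the whole of $\nodeSet{}^{(i)}$, which yields the zero coefficient vector. Hence $\operatorname{ker}\big(\tilde{\stiffMat}^{(i)}\big)$ meets the subspace of vectors supported on $\edgeSet{r}^{(i)}$ only in the origin, so $\stiffMat_{\mathrm{rr}}^{(i)}$ is positive definite, and therefore invertible.

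The only genuinely delicate point I anticipate is the kernel characterisation in the higher-order case, i.e.\ that $\operatorname{ker}\big(\tilde{\stiffMat}^{(i)}\big)$ equals the range of the control-mesh incidence operator and nothing larger. This rests on two ingredients available from the literature: exactness of the spline de~Rham complex on a contractible patch (so that the $\Curl$-kernel in $\mathbb{V}^{(i)}$ carries no harmonic fields) and the fact that $\Grad$ acts in control variables as the incidence operator of $\graph{}^{(i)}$. Everything else --- the positive-semi-definite submatrix criterion and the connectivity argument --- is routine; moreover the same reasoning shows the converse, namely that a disconnected $\graph{e}^{(i)}$ produces a non-trivial local kernel, which accounts for the non-invertible subdomains in \autoref{fig:flatDD}.
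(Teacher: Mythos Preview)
Your proof is correct and reaches the same conclusion as the paper, but by a somewhat different route. The paper argues in two steps: first, since $\graph{e}^{(i)}$ is connected and spans $\nodeSet{}^{(i)}$, the set $\edgeSet{e}^{(i)}$ contains a spanning tree of $\graph{}^{(i)}$; second, it invokes the classical tree--cotree result (cited from \cite{Albanese_1988aa,Manges_1995aa}) that removing the DOFs of a spanning tree makes the $\mathrm{curl}$--$\mathrm{curl}$ stiffness matrix invertible, and then observes that deleting further principal rows and columns preserves invertibility. You instead unpack the tree--cotree black box: you characterise $\operatorname{ker}\big(\tilde{\stiffMat}^{(i)}\big)$ explicitly as the range of the node--edge incidence operator via exactness of the spline de~Rham complex on a contractible patch, and then use the elementary graph-theoretic fact that a node potential whose differences vanish on a connected edge set must be constant. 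Your approach is more self-contained (it does not defer the key step to a citation), makes the role of exactness explicit, and, as you note, immediately yields the converse needed to explain the failing subdomains in \autoref{fig:flatDD}; the paper's version is shorter precisely because it leans on the established tree--cotree literature. The one point you flag as delicate---that the discrete gradient acts on control variables as the incidence operator of $\graph{}^{(i)}$ also in the higher-order case---is indeed the substantive ingredient, and is exactly what the references \cite{Kapidani_2022aa,Kapidani_2024aa} supply.
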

\begin{proof}
    The proof is based on the arguments from \cite{Albanese_1988aa,Manges_1995aa}. If all the DOFs from a spanning tree are removed from a stiffness matrix belonging to the $\mathrm{curl}$-$\mathrm{curl}$-operator, it is invertible which implies that all its column vectors are linearly independent. A spanning tree has the fundamental property that it connects all nodes with a minimal amount of edges. If we eliminate the DOFs related to a spanning tree along with some additional edges, we know that the resulting matrix is still invertible because removing one column and one row of the stiffness matrix preserves the linear independence of its column vectors. If $\graph{e}^{(i)}$ is connected, and since it contains all the nodes in $\nodeSet{}^{(i)}$, we implicitly know that $\edgeSet{e}^{(i)}$ contains a spanning tree on $\graph{}^{(i)}$. Hence, the statement for invertibility can be applied to the local stiffness matrix $\stiffMat^{(i)}$ and its reduced form $\stiffMat_{\mathrm{rr}}^{(i)}$ with graph $\graph{e}^{(i)}$, which concludes the proof.
\end{proof}
\begin{remark}
    Eliminating more DOFs than necessary for regularizing the corresponding stiffness matrix gives rise to issues regarding consistency. In other words, only a certain amount of DOFs can be chosen freely when solving a singular system. The ones which are additionally eliminated need to be prescribed consistently. In our context, these additional DOFs are either primal or belong to the Dirichlet boundary. The Dirichlet DOFs are automatically consistent if the boundary condition is consistent with the original problem. The primal DOFs are computed by solving the coarse problem \eqref{eq:coarseProb} which preserves the consistent, global context.
\end{remark}

Now, all tools are ready to define our selection of primal edges $\edgeSet{p}^{(i)}$, which we do in \autoref{thm:primEdges}: we prove that global and local invertibility can be obtained with our algorithm.

\begin{theorem}\label{thm:primEdges}
    Let the tree be generated with Kruskal's algorithm and weights as in \autoref{fig:kruskalWeights}, and let the local primal edges be chosen as 
    \begin{equation}
        \edgeSet{p}^{(i)}=\left(\crossEdges^{(i)}\cup\edgeSet{NI}^{(i)}\right)\Setminus\edgeSet{t}^{(i)}.
    \end{equation}
    Then system \eqref{eq:gloDP} satisfies global and local invertibility.
\end{theorem}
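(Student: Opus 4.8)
\emph{Proof plan.} I would prove the two assertions in turn, settling local invertibility first, because the resulting block structure is exactly what makes the global argument go through.

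\textbf{Local invertibility of the blocks $\stiffMat_{\mathrm{rr}}^{(i)}$.} By \autoref{lem:locInv} it suffices to show that, for every patch $i$, the graph $\graph{e}^{(i)}=(\nodeSet{}^{(i)},\edgeSet{e}^{(i)})$ of locally eliminated edges is connected; with the proposed primal set this reads $\edgeSet{e}^{(i)}=\edgeSet{D}^{(i)}\cup\edgeSet{t}^{(i)}\cup\crossEdges^{(i)}\cup\edgeSet{NI}^{(i)}$. The plan is to first restrict to the wire-basket subgraph $\graph{\lambda}^{(i)}$ of patch $i$ -- topologically the $1$-skeleton of a hexahedron -- and classify its macro-edges by the two patch-$i$ facets meeting along them: a macro-edge lying on a Dirichlet facet belongs to $\edgeSet{D}^{(i)}$, one between a Neumann and an interface facet is in $\edgeSet{NI}^{(i)}$, and one between two interface facets is exactly a member of $\crossEdges^{(i)}$; hence every macro-edge of patch $i$ except the purely-Neumann ones not already in $\edgeSet{t}^{(i)}$ lies in $\edgeSet{e}^{(i)}$. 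Next I would invoke the weight hierarchy of \autoref{fig:kruskalWeights}: since Kruskal's algorithm first grows $\edgeSet{t}$ to a full spanning tree of the connected wire basket $\graph{\lambda}$ (the fact exploited in \autoref{thm:constCotree}) and only afterwards extends it into facets (weight $6$) and volumes (weight $7$), every facet-interior and volume-interior node of patch $i$ is re-attached to the wire basket through edges of $\edgeSet{t}^{(i)}$, and any wire-basket node left isolated after deleting the missing purely-Neumann macro-edges is reconnected the same way; a short case distinction then gives connectedness of $\graph{e}^{(i)}$. The delicate point, and the main obstacle, is that the single \emph{global} spanning tree $\edgeSet{t}$ is only a \emph{forest} once restricted to patch $i$ -- torn precisely where the global tree path between two patch-$i$ nodes detours through a neighbouring patch -- so I must verify that each such tear is bridged by an edge of $\edgeSet{D}^{(i)}\cup\crossEdges^{(i)}\cup\edgeSet{NI}^{(i)}$; the configuration in \autoref{fig:flatDD}, where omitting $\edgeSet{NI}^{(i)}$ genuinely disconnects $\graph{e}^{(i)}$ for the central and lower-left patches, shows that the Neumann--interface macro-edges around an isolated Neumann corner are exactly the case driving the choice of primal set.

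\textbf{Global invertibility of \eqref{eq:gloDP}.} For the saddle-point matrix in \eqref{eq:gloDP} I would use the standard solvability criterion already cited from \cite[Sec.~6.3.1]{Toselli_2005aa} and \cite[Thm.~3.2]{Benzi_2005aa}: the matrix is nonsingular provided $\couplMat_{\mathrm{r}}$ has full row rank and the partially assembled, tree-gauged primal block $\big[\begin{smallmatrix}\stiffMat_{\mathrm{rr}} & \stiffMat_{\mathrm{rp}}\primMat\\ \primMat\trans\stiffMat_{\mathrm{pr}} & \primMat\trans\stiffMat_{\mathrm{pp}}\primMat\end{smallmatrix}\big]$ is positive definite on $\ker\couplMat_{\mathrm{r}}$. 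Full row rank of $\couplMat_{\mathrm{r}}$: the only redundancy in the interface coupling after eliminating tree DOFs comes from the cross-edges (the source of $\ker(\couplMat\trans)\neq\{\mathbf{0}\}$ noted earlier); every cross-edge not in $\edgeSet{t}$ is, by our selection, a primal edge and thus removed from the remaining coupling, while the cross-edges in $\edgeSet{t}$ produce zero coupling rows that were discarded, so no redundant constraint survives. Definiteness on $\ker\couplMat_{\mathrm{r}}$: the primal block is positive semidefinite, so it is enough to show its kernel meets $\ker\couplMat_{\mathrm{r}}$ only in $\mathbf{0}$; such a vector assembles to a globally single-valued discrete field lying in the $\operatorname{curl}$-$\operatorname{curl}$ kernel of every patch, hence, by \eqref{eq:kernel} with $n_{\mathrm{h}}=0$, a global discrete gradient with vanishing tree DOFs, and since $\edgeSet{t}$ is a spanning tree of the connected graph $\graph{}$ the underlying scalar potential is constant, i.e. the field is $\mathbf{0}$. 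This yields nonsingularity of \eqref{eq:gloDP}; combined with the local invertibility of the $\stiffMat_{\mathrm{rr}}^{(i)}$, the Schur eliminations \eqref{eq:recRem} and \eqref{eq:firstSchur}--\eqref{eq:localProb} are all well defined, so $\coarseMat$ and $\interfaceMat$ are invertible as well.

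In summary, the global half is a fairly mechanical application of saddle-point theory once the rank of $\couplMat_{\mathrm{r}}$ and the trivial kernel intersection are pinned down, whereas the combinatorial heart -- proving $\graph{e}^{(i)}$ connected for every patch by tracking how the global spanning tree tears into local forests and checking that the Dirichlet, cross- and Neumann--interface edges bridge every tear -- is where the real work, and the justification of this particular primal selection, lies.
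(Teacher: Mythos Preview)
Your overall strategy coincides with the paper's: invoke \autoref{lem:locInv}, prove $\graph{e}^{(i)}$ connected by climbing wire basket $\to$ facets $\to$ volume, and handle global invertibility via the cross-edges; the paper merely does global first and local second. Your saddle-point treatment of global invertibility is in fact more explicit than the paper's one-sentence appeal to the earlier remark that, with all cotree cross-edges made primal, \eqref{eq:gloDP} reduces to the correctly gauged system \eqref{eq:discVecPot}.

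The one place your sketch is genuinely less sharp is the wire-basket step. You correctly see that the only wire-basket edges missing from $\edgeSet{e}^{(i)}$ are the purely-Neumann cotree edges, but then write that isolated NN nodes are ``reconnected the same way'' as facet- and volume-interior nodes, i.e.\ through the weight-$6$/$7$ tree extension, and defer the rest to ``a short case distinction.'' That extension goes the wrong direction---it attaches interior nodes \emph{to} the wire basket, not wire-basket nodes to one another---so the reconnection must already happen inside $\graph{\lambda}^{(i)}$. The paper settles this by first computing the explicit identity
\[
\edgeSet{\lambda e}^{(i)}=\bigl(\edgeSet{\lambda}^{(i)}\Setminus\edgeSet{NN}^{(i)}\bigr)\cup\bigl(\edgeSet{NN}^{(i)}\cap\edgeSet{t}^{(i)}\bigr)
\]
and then using the observation your plan is missing: NN edges and NN nodes lie on $\partial\Omega$ and therefore belong to \emph{exactly one} patch, so the local sets $\edgeSet{NN}^{(i)}$, $\nodeSet{NN}^{(i)}$ coincide with their global counterparts and the fact that $\edgeSet{t}$ spans $\graph{\lambda}$ transfers verbatim to patch~$i$, connecting every NN node to the rest of $\edgeSet{\lambda e}^{(i)}$. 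That ``NN is not shared'' observation is what replaces your tear-bridging case analysis with a one-line argument; once it is in place, the extension to $\graph{\Gamma}^{(i)}$ and then $\graph{}^{(i)}$ via weight-$6$ and weight-$7$ tree edges proceeds exactly as you describe.
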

\begin{proof}
    Global invertibility originates from the fact that all edges $\crossEdges^{(i)}\Setminus\edgeSet{t}^{(i)}$ are contained in $\edgeSet{p}^{(i)}$ and that the tree edges $\edgeSet{t}$ and correspondingly $\edgeSet{t}^{(i)}$ are chosen such that the multipatch problem is gauged appropriately.

    For local invertibility we use \autoref{lem:locInv}, for which we need to show that the combination of the tree construction and our final selection of primal DOFs yields connected graphs $\graph{e}^{(i)}$ on every subdomain. The proof is carried out separately for every subdomain $\Omega^{(i)}$ in three steps. The three steps are respectively: First, showing that all nodes $\nodeSet{\lambda}^{(i)}$ are connected by edges in $\edgeSet{\lambda e}^{(i)}\coloneqq\edgeSet{\lambda}^{(i)}\cap\edgeSet{e}^{(i)}$. Second, that the extension of Kruskal's algorithm onto $\nodeSet{\Gamma}^{(i)}$ and the elimination of all Dirichlet DOFs connects all nodes $\nodeSet{\Gamma}^{(i)}$ with edges in $\edgeSet{\Gamma e}^{(i)}\coloneqq\edgeSet{\Gamma}^{(i)}\cap\edgeSet{e}^{(i)}$. Third, that the extension of the tree into the interior yields a connected graph $\graph{e}^{(i)}$ for every subdomain.

    By reformulating $\edgeSet{\lambda e}^{(i)}$, one obtains
    \begin{align*}
        \edgeSet{\lambda e}^{(i)}&=\left(\edgeSet{D}^{(i)}\cap\edgeSet{\lambda}^{(i)}\right)\cup\left(\edgeSet{t}^{(i)}\cap\edgeSet{\lambda}^{(i)}\right)\cup\left(\edgeSet{p}^{(i)}\cap\edgeSet{\lambda}^{(i)}\right) \\
        &=\left(\edgeSet{DD}^{(i)}\cup\edgeSet{DI}^{(i)}\cup\edgeSet{DN}^{(i)}\right)\cup\left(\left(\crossEdges^{(i)}\cup\edgeSet{NI}^{(i)}\cup\edgeSet{NN}^{(i)}\right)\cap\edgeSet{t}^{(i)}\right)\cup\left(\left(\crossEdges^{(i)}\cup\edgeSet{NI}^{(i)}\right)\Setminus\edgeSet{t}^{(i)}\right) \\
        &=\underbrace{\edgeSet{DD}^{(i)}\cup\edgeSet{DI}^{(i)}\cup\edgeSet{DN}^{(i)}\cup\edgeSet{NI}^{(i)}\cup\edgeSet{II}^{(i)}}_{=\edgeSet{\lambda}^{(i)}\Setminus\edgeSet{NN}^{(i)}}\cup\left(\edgeSet{NN}^{(i)}\cap\edgeSet{t}^{(i)}\right).
    \end{align*}
    This implies that all nodes in $\nodeSet{\lambda}^{(i)}\Setminus\nodeSet{NN}^{(i)}$ are already connected because all available edges $\edgeSet{\lambda}^{(i)}\Setminus\edgeSet{NN}^{(i)}$ are contained in $\edgeSet{\lambda e}^{(i)}$. The remaining ones $\nodeSet{NN}^{(i)}$ are also connected because Kruskal's algorithm with given weights constructs a tree on $\graph{\lambda}$ first. In detail, this is valid because all nodes in $\nodeSet{NN}^{(i)}$ and all edges in $\edgeSet{NN}^{(i)}$ stay the same in both the local as well as the global context (no components of the sets are removed in the union operation). In other words, all nodes in $\nodeSet{NN}$ are connected by edges in $\edgeSet{NN}\cap\edgeSet{t}$ which in turn implies the same for the respective local sets. Altogether, we therefore know that all nodes of $\nodeSet{\lambda}^{(i)}$ are connected by edges in $\edgeSet{\lambda e}^{(i)}$.

    For the second step, we need to show that all nodes in $\nodeSet{\Gamma}^{(i)}$ are connected by edges in $\edgeSet{\Gamma e}^{(i)}$. We already know that this holds for $\nodeSet{\lambda}^{(i)}\subseteq\nodeSet{\Gamma}^{(i)}$ with edges from $\edgeSet{\lambda e}^{(i)}\subseteq\edgeSet{\Gamma e}^{(i)}$.  Next, Kruskal's algorithm extends the tree into every facet independently which connects all nodes in $\nodeSet{\Gamma}^{(i)}$ to the already available tree (whose edges are part of the eliminated edges). Furthermore, Dirichlet edges are eliminated which does not influence the connectedness. Consequently, all nodes in $\nodeSet{\Gamma}^{(i)}$ are connected by edges in $\edgeSet{\Gamma e}^{(i)}$.

    The same arguments can be employed in the third step, where extending the tree into every subdomain implies that all nodes in $\nodeSet{}^{(i)}$ are connected by edges in $\edgeSet{e}^{(i)}$. Hence, all graphs $\graph{e}^{(i)}$ are connected and every local block in $\stiffMat_{\mathrm{rr}}$ is invertible due to \autoref{lem:locInv}.
\end{proof}
\begin{remark}\label{rem:effPrimal}
    Our choice of $\edgeSet{p}^{(i)}$ is both straightforward to implement and efficient. It is simple because we do not need an algorithm to search for edges that are necessary to obtain local connectedness. At the same time, we may select more edges than needed for local connectedness but we know $\edgeSet{p}\subseteq\edgeSet{\lambda c}$. Due to this and \eqref{eq:linCompl} we know that
    \begin{equation}
        \nCoarse=\cardSet{\edgeSet{p}}\leq\cardSet{\edgeSet{\lambda c}}=\mathcal{O}(N)\quad\Rightarrow\quad \nCoarse=\mathcal{O}(N)
    \end{equation}
    holds. Hence, the size of the coarse problem $\nCoarse$ only grows linearly with the number of subdomains $N$. This is related to efficiency, since the size of the coarse problem is a major factor for scalability because of the sequential computations necessary to evaluate the action of $\coarseMat^{-1}$.
\end{remark}
}
\begin{remark}
    To the best of our understanding, our procedure is very similar to the approach from \citeauthor{Yao_2012ab} in \cite{Yao_2012aa,Yao_2012ab} when considering a pure Dirichlet problem. Their construction is not explicitly described but both approaches shall eventually satisfy the same conditions. In contrast, we extend the underlying idea to problems with mixed boundary conditions by introducing additional edge weights and \revised{prove local invertibility for both the full Dirichlet and the more general case.}
\end{remark}
    
    \section{Numerical Experiments}\label{sec:nums}
    In this section, we verify that the overall algorithm correctly solves general 3D magnetostatic problems and investigate some performance indicators. We focus on supporting our theoretical findings rather than high-performance computing. We employ the IGA-library \texttt{GeoPDEs} \cite{Vazquez_2016aa} in \texttt{Matlab} 2023a \cite{Mathworks_2023aa} for all tests. \revised{We obtain the actions of $\stiffMat_{\mathrm{rr}}^{-1}$ and $\coarseMat^{-1}$ on vectors via appropriate factorizations (e.g. $\mathbf{L}\mathbf{D}\mathbf{L}\trans$) provided by \texttt{Matlab}. The factorization of $\coarseMat$ in combination with forward/backward substitution can be exploited to solve \eqref{eq:intProb} iteratively using a Conjugate Gradient (CG) method without explicitly assembling $\mathbf{S}$.} After obtaining $\mults_{\mathrm{r}}$ from \eqref{eq:intProb}, one can recover $\dofs_{\mathrm{p}}$ and $\dofs_{\mathrm{r}}$ from \eqref{eq:coarseProb} and \eqref{eq:localProb}, respectively.

For tests related to numerical accuracy we use the method of manufactured solutions \cite[Section 6.3]{Oberkampf_2010aa}. We predefine the magnetic vector potential 
\begin{equation}
    \Afield_{\mathrm{ana}}(x,y,z)=
    \begin{bmatrix}
        \cos(y)\cos(z)\sin(x) \\
        -2\cos(x)\cos(z)\sin(y) \\
        \cos(x)\cos(y)\sin(z) \\
    \end{bmatrix}
    \quad\Rightarrow\quad
    \Bfield_{\mathrm{ana}}(x,y,z)=
    \Curl\Afield_{\mathrm{ana}}=
    \begin{bmatrix}
        -3\cos(x)\sin(y)\sin(z) \\
        0 \\
        3\cos(z)\sin(x)\sin(y) \\
    \end{bmatrix}\label{eq:anaSol}.
\end{equation}
Then, we compute the source current density and the boundary conditions \revised{by applying the respective operators and traces in \eqref{eq:vecPot1}-\eqref{eq:vecPot3} to $\Afield_{\mathrm{ana}}$ for a given domain configuration.} This allows us to study inhomogeneous Dirichlet and Neumann boundary conditions. For simplicity, we assume $\reluctivity\equiv1$.

Numerical errors will be measured by the subdomain-wise $\Hcurl{\Omega}$ seminorm \revised{of $\Afield$}
\begin{equation}
    \errBfield\coloneqq\sqrt{\sum_{i=1}^{N}\norm{\Bfield_{\mathrm{ana}}-\Bfield_{\mathrm{num}}}{L^2(\Omega^{(i)})}^2}\label{eq:errB},
\end{equation}
\revised{which corresponds to the $L^2$-error of the magnetic flux density $\Bfield$.} To measure the rate of convergence, we investigate the number of subdivisions in each direction in each subdomain $s_{\mathrm{h}}$, for which we may expect
\begin{equation}
    \errBfield\leq C h^{p}=\widetilde{C} s_{\mathrm{h}}^{-p}\label{eq:ErrEst}
\end{equation}
for suitable $C,\widetilde{C}\in\mathbb{R}_{>0}$.

Finally, the condition numbers of several matrices are investigated. We use the following definition \cite{Todd_1950aa}
\begin{equation*}
    \cond{\mathbf{X}}=\frac{\maxEigv{\mathbf{X}}}{\minEigv{\mathbf{X}}},
\end{equation*}
where $\maxEigv{\mathbf{X}}$ and $\minEigv{\mathbf{X}}$ are the maximal and minimal absolute eigenvalues of $\mathbf{X}$, respectively. \revised{In practice, these are estimated by an approach based on the Lanczos algorithm \cite[Sec.~6.7.3]{Saad_2000aa}.}

\subsection{Proof of Concept on Spherical Geometry}\label{subsec:sphere}
In this subsection, we provide experimental proof that our spanning tree construction and the IETI-DP algorithm work as promised. We use a spherical geometry, with radius 2, as depicted in \autoref{fig:sphExp_a}, and solve a manufactured problem with solution \eqref{eq:anaSol} for either Dirichlet $\Gamma_\mathrm{D}=\partial\Omega$ or Neumann boundary conditions $\Gamma_\mathrm{N}=\partial\Omega$.
We visualize the control mesh with tree and primal edges in \autoref{fig:sphExp_b} and the numerical flux density in \autoref{fig:sphRes_a}.
\begin{figure}
    \centering
    \begin{subfigure}[b]{0.49\linewidth}
        \centering
		\includegraphics[width=0.7\linewidth, trim= 6.2cm 1cm 6cm 1cm, clip]{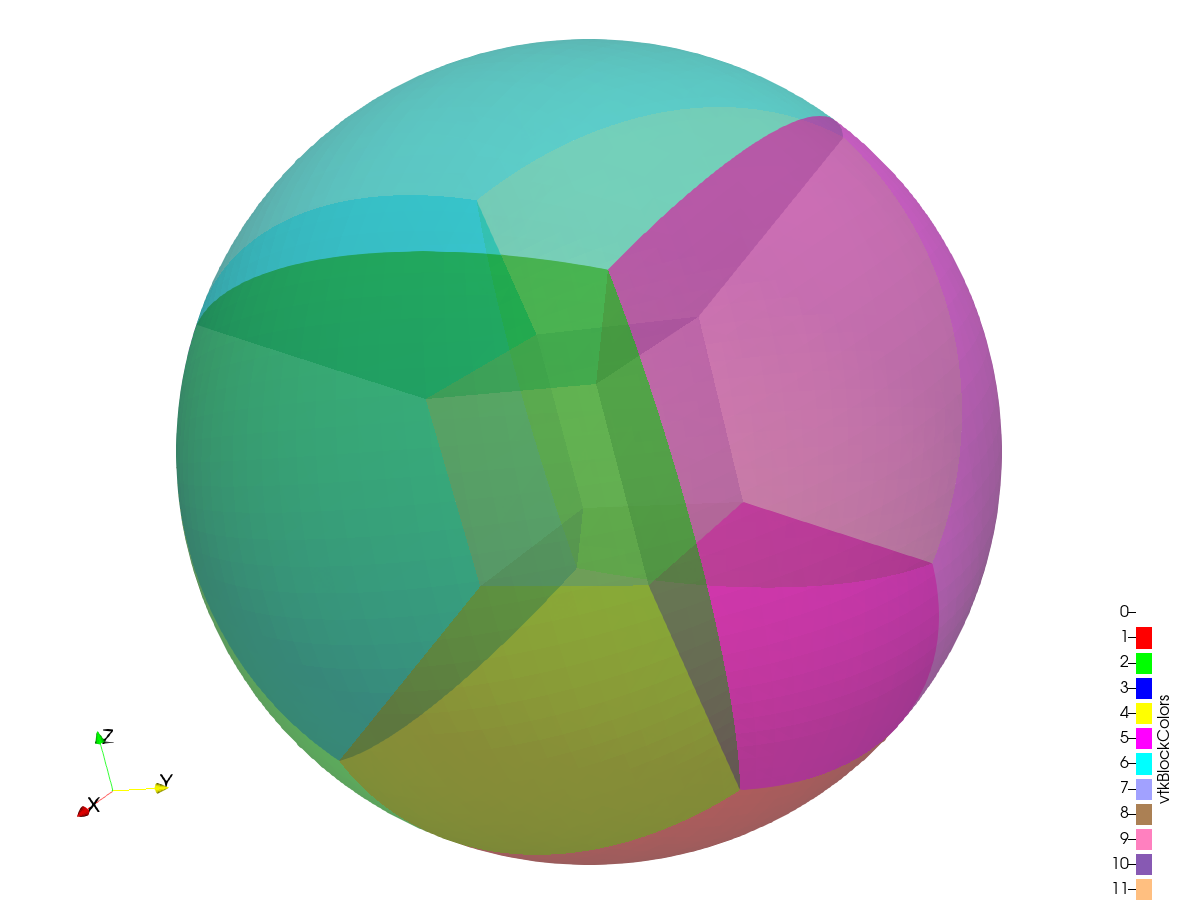}
        \caption{Decomposition of sphere in 7 subdomains.}
        \label{fig:sphExp_a}
    \end{subfigure}
    \begin{subfigure}[b]{0.49\linewidth}
        \centering
        \includegraphics[width=0.9\linewidth,height=5.5cm]{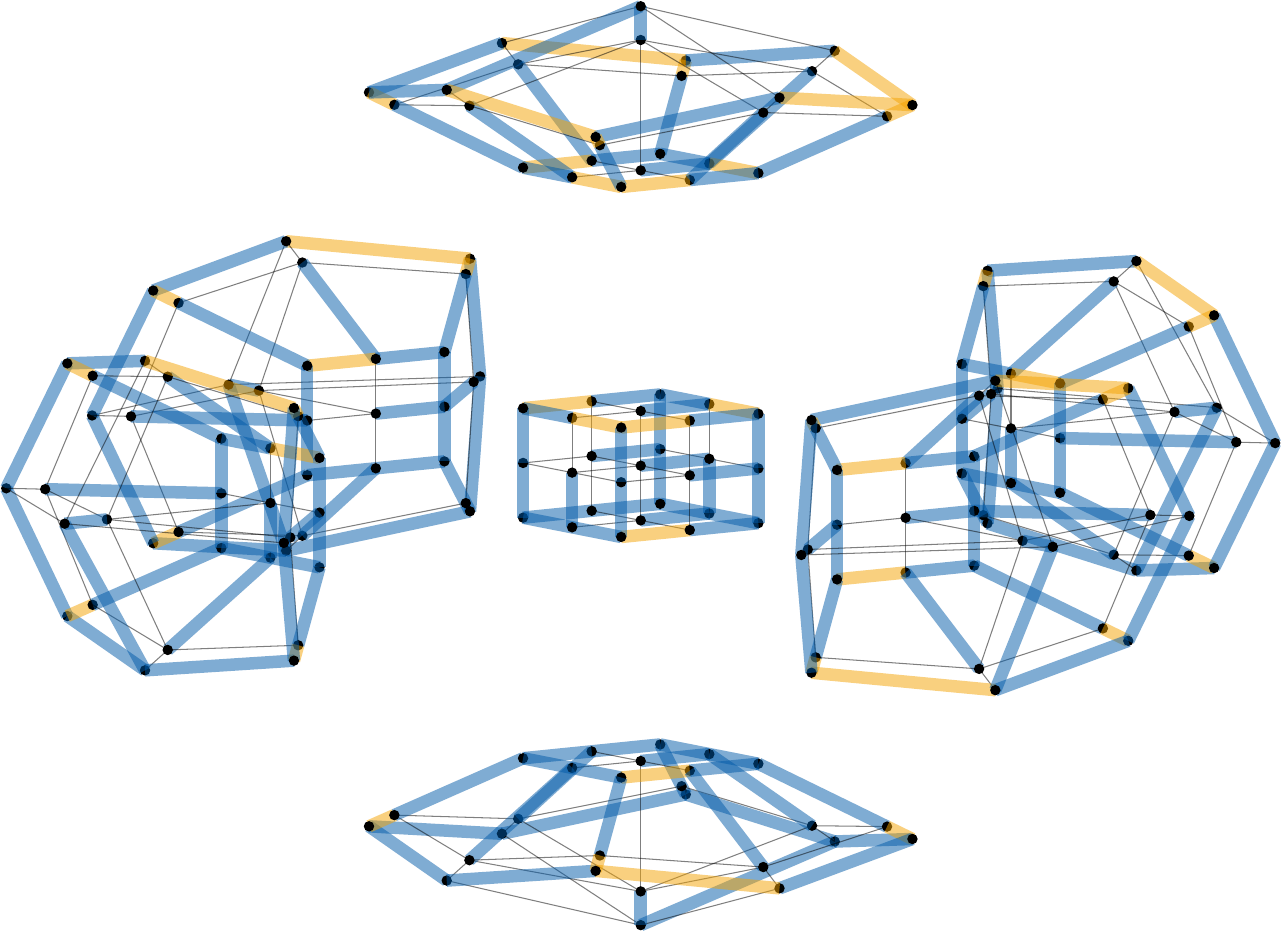}
        \caption{Control mesh with tree and primal DOFs.}
        \label{fig:sphExp_b}
    \end{subfigure}
    \caption{Geometry and exemplary control mesh for spherical geometry including tree DOFs (blue), and primal DOFs (orange).}
    \label{fig:sphExp}
\end{figure}
The main experiment of this subsection is a convergence study of $\errBfield$ with respect to $s_{\mathrm{h}}$. The results are visualized in \autoref{fig:sphRes_b} where one can observe the expected optimal behavior as given in \eqref{eq:ErrEst}. \revised{Furthermore, no issues due to singular matrices were observed implying that all $\stiffMat_{\mathrm{rr}}^{(i)}$ are gauged correctly.} Therefore, we conclude that the combination of our tree construction, the primal DOF selection and IETI-DP can be used to compute correct solutions for the flux density $\Bfield$ to \eqref{eq:vecPot1}-\eqref{eq:vecPot3}.
\begin{figure}
    \centering
    \begin{subfigure}[b]{0.4\linewidth}
        \centering
		\includegraphics[width=0.85\linewidth, trim= 6.2cm 1cm 6cm 1cm, clip]{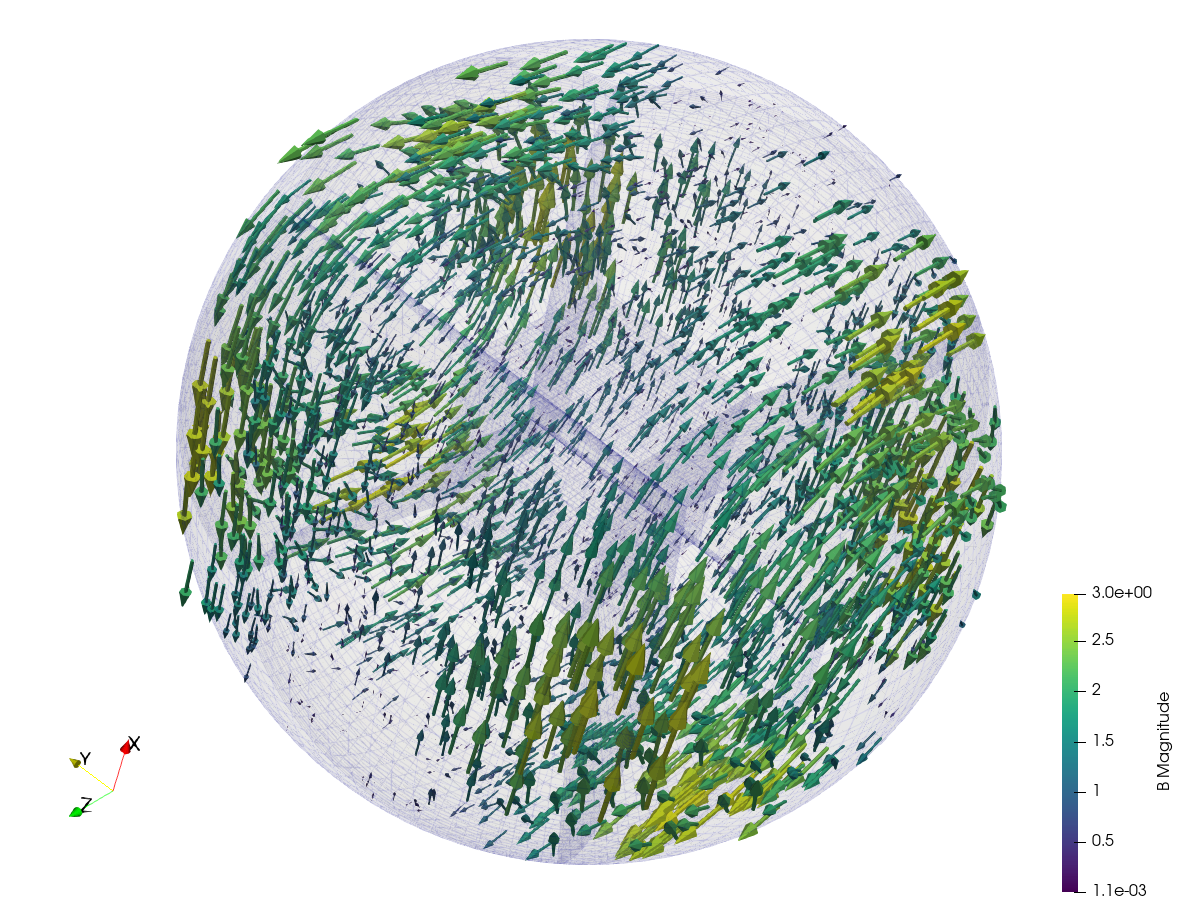}
        \caption{Numerically computed flux density.}
        \label{fig:sphRes_a}
    \end{subfigure}
    \begin{subfigure}[b]{0.59\linewidth}
        \centering
        \begin{tikzpicture}
			\begin{loglogaxis}[
			    xmin=1.5,xmax=20, ymin=1e-3, ymax=4.5, log origin=infty, xlabel={Subdivisions $s_{\mathrm{h}}$}, ylabel={Error $\errBfield$}, xtick={2,4,8,16}, xticklabels={2,4,8,16}, width=0.75\linewidth, height=6cm, legend style={at={(1.02,0.5)}, anchor=west, font=\tiny}, legend columns=1, legend cell align=left,
            ]
                \addplot+[mark=square, blue] table[x index = 0,y index = 1, col sep=comma] {data/nmnResultsSphere.csv};
				\addplot+[mark=square, red] table[x index = 0,y index = 2, col sep=comma] {data/nmnResultsSphere.csv};
				\addplot+[mark=square, brown] table[x index = 0,y index = 3, col sep=comma] {data/nmnResultsSphere.csv};

                \addplot+[mark=x, dashed, orange, every mark/.append style={solid}] table[x index = 0,y index = 1, col sep=comma] {data/dirResultsSphere.csv};
				\addplot+[mark=x, dashed, cyan, every mark/.append style={solid}] table[x index = 0,y index = 2, col sep=comma] {data/dirResultsSphere.csv};
				\addplot+[mark=x, dashed, green, every mark/.append style={solid}] table[x index = 0,y index = 3, col sep=comma] {data/dirResultsSphere.csv};

                \addplot+[black, dashed, mark=none, samples at={2,3,4,6,8,11,16}] {10^(0.8-log10(x))};
                \addplot+[black, dashed, mark=none, samples at={2,3,4,6,8,11,16}, forget plot] {10^(0.8-2*log10(x))};
                \addplot+[black, dashed, mark=none, samples at={2,3,4,6,8,11,16}, forget plot] {10^(0.8-3*log10(x))};

                \legend{$\text{Neu.,}~p=1$,$\text{Neu.,}~ p=2$,$\text{Neu.,}~p=3$,$\text{Dir.,}~p=1$,$\text{Dir.,}~p=2$,$\text{Dir.,}~p=3$,Order: $-p$};
			\end{loglogaxis}
        \end{tikzpicture}
        \caption{Error of Neumann (square) and Dirichlet (cross) problem.}
        \label{fig:sphRes_b}
    \end{subfigure}
    \caption{Visualization of solution and convergence study for Dirichlet and Neumann problem (with $\cgTol=\SI{1e-6}{}$ for the interface problem).}
    \label{fig:sphRes}
\end{figure}

\subsection{Experiments on Toroidal Geometry}\label{subsec:toroidal}
\revised{In this subsection, we examine the toroidal geometry in \autoref{fig:torExp_a} and the associated issues with the harmonic field $\harmonic$. This domain is not simply connected, i.e. its Betti number $b_1=1>0$ which means that there is a hole in $\Omega$.} We either employ full Neumann conditions, i.e. $\Gamma_\mathrm{N}=\partial\Omega$, or the setup with mixed boundary
\begin{equation*}
	\Gamma_{\mathrm{D}}\coloneqq\left\{r=r_{\mathrm{i}}, ~~\theta\in\left(0,2\pi\right],~~z\in\left(0,2\right)\right\} 
	\quad
	\text{and}
	\quad
	\Gamma_{\mathrm{N}}:=\partial\Omega\Setminus\closure{\Gamma}_{\mathrm{D}}.
\end{equation*}
We expect problems related to a non-trivial kernel to occur only for the full Neumann, but not in the mixed setting. When the Dirichlet constraints, which form a loop around the hole, are strongly imposed, the harmonic contribution to the kernel is automatically removed. However, in the full Neumann setting, we need to add an edge to the spanning tree that closes a loop around the hole \cite{Dlotko_2017aa}.

\revised{In practice, the predicted issue is not immediately visible: In \autoref{fig:torExp_b} optimal convergence behavior is observable in both the Neumann and the mixed setup. The reason is that \texttt{Matlab}'s $\mathbf{L}\mathbf{D}\mathbf{L}\trans$-factorization can cope with the singular systems, but issues warnings.} However, when examining the condition numbers, we see that $\cond{\coarseMat}$ \revised{cannot be computed with the employed method}, see \autoref{tab:torKernels} for $N=6$, in the full Neumann boundary case. This indicates a rank deficit. Nevertheless, $\cond{\interfaceMat}$ and the error $\errBfield$ remain comparable to the problem with mixed boundary conditions due to the implicit gauge of the applied factorizations. The configuration denoted with `Mod.' in \autoref{tab:torKernels} refers to the Neumann problem with an additional tree edge added due to cohomology considerations. As expected, this repairs the rank deficit and $\cond{\coarseMat}$ behaves normally. A visualization is given in \autoref{fig:torMeshSub6} where the violet edge is added to the blue tree. We include a second simpler decomposition with $N=3$ where the rank deficit can be observed in the local context, i.e. $\cond{\stiffMat_{\mathrm{rr}}}$ becomes very large for the Neumann setting. This may be surprising since we restrict harmonic fields to be only a global issue, but a closer look at \autoref{fig:torMeshSub3} explains this behavior: without the additional edge, the graph $\graph{e}^{(i)}$ of eliminated DOFs is not connected in one subdomain. However, as before, the \texttt{Matlab} solvers are able to deal with the rank deficit such that the corresponding $\cond{\coarseMat}$, $\cond{\interfaceMat}$ and $\errBfield$ stay appropriate. But again, adding an edge that closes a loop around the hole eliminates the additional kernel element and its associated difficulties. This verifies our assumptions and demonstrates that our algorithm can correctly handle both mixed boundary conditions and geometries with $b_1\neq 0$ if a little additional care is taken. We recommend using a cohomology-aware tree construction, such as \cite{Dlotko_2017aa}, to automatically create appropriate \revised{``belted'' trees that directly include additional loop-closing edges.}

\begin{figure}
    \centering
    \begin{subfigure}[b]{0.4\linewidth}
        \centering
        \includegraphics[width=0.85\linewidth, trim= 12cm 3cm 12cm 3cm, clip]{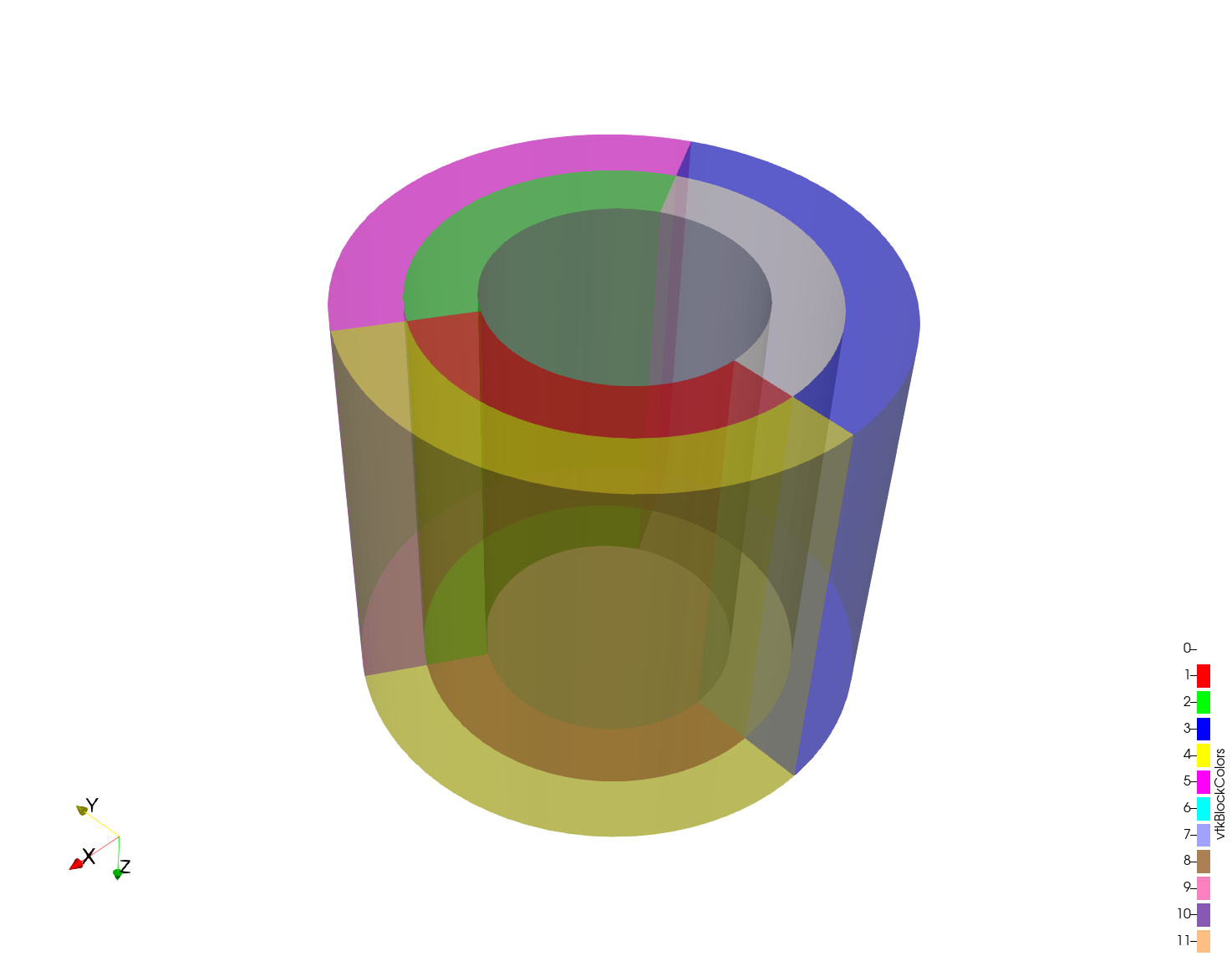}
        \caption{Decomposition of torus in 6 subdomains.}
        \label{fig:torExp_a}
    \end{subfigure}
    \begin{subfigure}[b]{0.59\linewidth}
        \centering
        \begin{tikzpicture}
			\begin{loglogaxis}[
			    xmin=1.5, xmax=20, ymin=1e-4, ymax=2, log origin=infty, xlabel={Subdivisions $s_{\mathrm{h}}$}, ylabel={Error $\errBfield$}, xtick={2,4,8,16}, xticklabels={2,4,8,16}, width=0.75\linewidth, height=6cm, legend style={at={(1.02,0.5)}, anchor=west, font=\tiny}, legend columns=1, legend cell align=left,
            ]

            \addplot+[mark=square, blue] table[x index = 0,y index = 1, col sep=comma] {data/torus_neu_errData.csv};
			\addplot+[mark=square, red] table[x index = 0,y index = 2, col sep=comma] {data/torus_neu_errData.csv};
			\addplot+[mark=square, brown] table[x index = 0,y index = 3, col sep=comma] {data/torus_neu_errData.csv};

            \addplot+[mark=x, dashed, orange, every mark/.append style={solid}] table[x index = 0,y index = 1, col sep=comma] {data/torus_dir_errData.csv};
			\addplot+[mark=x, dashed, cyan, every mark/.append style={solid}] table[x index = 0,y index = 2, col sep=comma] {data/torus_dir_errData.csv};
			\addplot+[mark=x, dashed, green, every mark/.append style={solid}] table[x index = 0,y index = 3, col sep=comma] {data/torus_dir_errData.csv};

            \addplot+[black, dashed, mark=none, samples at={2,3,4,6,8,16}] {10^(0.3-log10(x))};
            \addplot+[black, dashed, mark=none, samples at={2,3,4,6,8,16}, forget plot] {10^(0.3-2*log10(x))};
            \addplot+[black, dashed, mark=none, samples at={2,3,4,6,8,16}, forget plot] {10^(0.3-3*log10(x))};

            \legend{$\text{Neu.,}~p=1$,$\text{Neu.,}~ p=2$,$\text{Neu.,}~p=3$,$\text{Mix.,}~p=1$,$\text{Mix.,}~p=2$,$\text{Mix.,}~p=3$,Order: $-p$}; 
            
			\end{loglogaxis}
        \end{tikzpicture}
        \caption{Convergence for mixed BCs and full Neumann problem.}
        \label{fig:torExp_b}
    \end{subfigure}
    \caption{Visualization of model problem with geometry (height: $h=2$, inner radius: $r_{\mathrm{i}}=1$, outer radius: $r_{\mathrm{o}}=2$) and convergence study for a configuration with Dirichlet boundary conditions around the hole and a configuration with full Neumann boundary conditions (with $\cgTol=\SI{1e-6}{}$ for the interface problem).}
    \label{fig:torExp}
\end{figure}

\begin{table}
    \centering
    \pgfplotstabletypeset[
    	col sep=comma,
        columns={Exp.,N,condArr,condF,noPrec condS,err},
    	columns/Exp./.style={column name={Config.}, string type},
    	columns/N/.style={column name=$N$},
        columns/condArr/.style={column name=$\cond{\stiffMat_{\mathrm{rr}}}$, , /pgf/number format/.cd, sci, sci zerofill, precision=2},
        columns/condF/.style={column name=$\cond{\coarseMat}$, /pgf/number format/.cd, sci, sci zerofill, precision=2},
        columns/noPrec condS/.style={column name=$\cond{\interfaceMat}$, /pgf/number format/.cd, sci, sci zerofill, precision=2},
    	columns/err/.style={column name=$\errBfield$, /pgf/number format/.cd, sci, sci zerofill, precision=2},
        specialCell1/.style={@cell content=\ensuremath{\mathbf{#1}}},
        specialCell2/.style={@cell content={---}},
        every head row/.style={before row=\hline,after row=\hline},
        every row no 2/.style={after row=\hline},
        every last row/.style={after row=\hline},
        every row 1 column 2/.style={
            postproc cell content/.append code={%
                \pgfkeysgetvalue{/pgfplots/table/@cell content}{\myTmpVal}%
                \pgfkeysalso{specialCell1/.expand once={\myTmpVal}}
            }%
        },
        every row 4 column 3/.style={
            postproc cell content/.append code={%
                \pgfkeysgetvalue{/pgfplots/table/@cell content}{\myTmpVal}%
                \pgfkeysalso{specialCell2/.expand once={\myTmpVal}}
            }%
        },
    ]{data/ietiTorusHarmonicResults.csv}
    \caption{Examination of condition numbers and error for different boundary configurations and a modified gauge for the Neumann problem ($p=2$, $s_{\mathrm{h}}=8$ and $\cgTol=\SI{1e-6}{}$). The additional harmonic kernel introduces numerical difficulties highlighted in bold or with ``---'' which indicates that the condition number estimation terminated early due to failing internal checks.}
    \label{tab:torKernels}
\end{table}

\begin{figure}
    \centering
    \begin{subfigure}[b]{0.49\linewidth}
        \centering
        \includegraphics[width=0.9\linewidth]{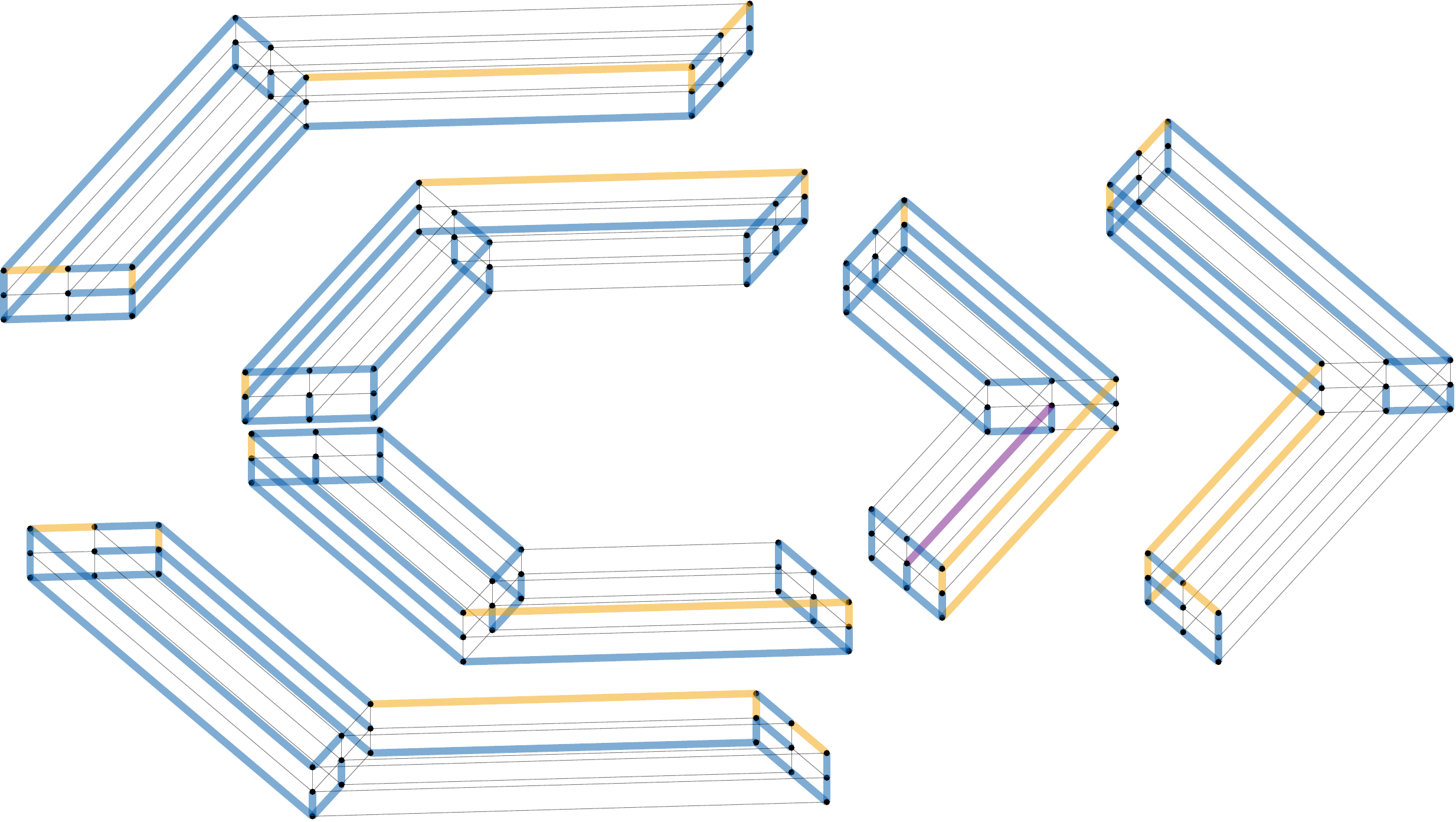}
        \caption{More complex decomposition with 6 subdomains.}
        \label{fig:torMeshSub6}
    \end{subfigure}
    \begin{subfigure}[b]{0.49\linewidth}
        \centering
        \includegraphics[width=0.9\linewidth]{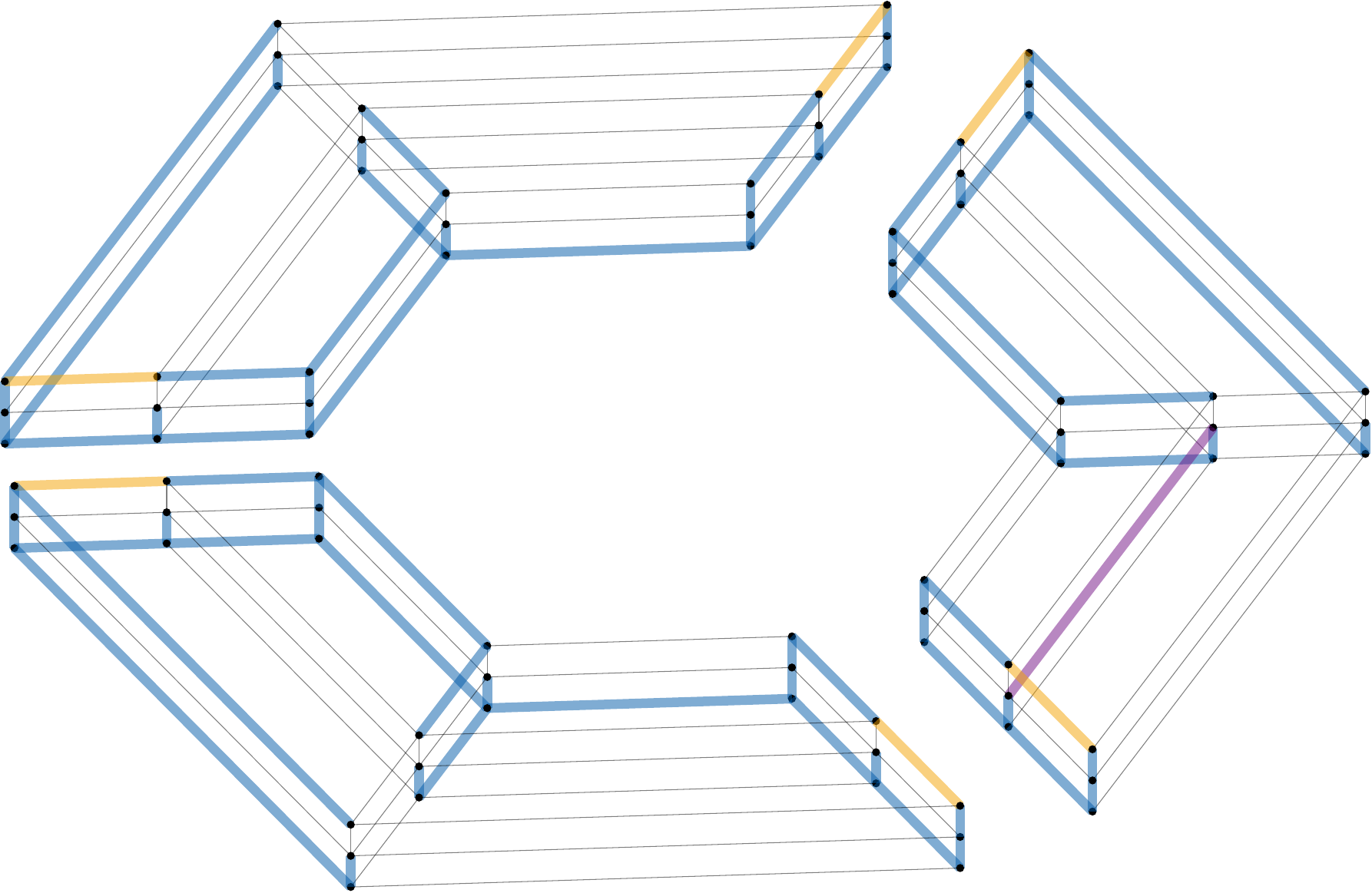}
        \caption{Simple decomposition with 3 subdomains.}
        \label{fig:torMeshSub3}
    \end{subfigure}
    \caption{Different decompositions of the torus, represented by their local meshes for $p=1$ and $s_{\mathrm{h}}=2$. The meshes include the tree edges (blue), the primal edges (orange) and the additional tree edge used for removing the harmonic kernel (violet).}
    \label{fig:torMesh}
\end{figure}

\subsection{Numerical Scalability}\label{subsec:scalab}
To evaluate the behavior of our algorithm for large problems, we follow the tests and explanations of \citeauthor{Farhat_2001aa} in \cite{Farhat_2001aa} and references therein. They determine scalability by looking at the sequential bottlenecks in the dual-primal algorithm. In the context of parallel performance, the sequential part is relevant in both Amdahl's law (strong scalability) \cite{Amdahl_1967aa,Klawonn_2010aa} and Gustafson's law (weak scalability) \cite{Gustafson_1998aa,Klawonn_2010aa}.

For the following test, we use the unit cube geometry with side length 1, which is decomposed into $N$ equally sized cubes. For the boundary, we employ mixed boundary conditions with a not connected Dirichlet boundary
\begin{equation*}
	\Gamma_{\mathrm{D}}\coloneqq\bigl\{[x_1, x_2, x_3]^{\top},~~x_1,x_3\in(0,1),~x_2\in\{0,1\}\bigr\},
	\quad
	\text{and}
	\quad
	\Gamma_{\mathrm{N}}:=\partial\Omega\Setminus\closure{\Gamma}_{\mathrm{D}}.
\end{equation*}
which is handled automatically by our algorithm. 
Due to this special decomposition, we can introduce the coarse problem size (body diagonal of each subdomain) as $H=\sqrt{3}N^{-\nicefrac{1}{3}}$ and the corresponding number of coarse subdivisions $s_{\mathrm{H}}=\sqrt{3}H^{-1}$. If the local subdivisions are uniform in each subdomain, the relation $h=\sqrt{3}(s_{\mathrm{h}}s_{\mathrm{H}})^{-1}$ can be derived. Consequently, the well-known FETI-ratio, as discussed in  \cite{Farhat_2001aa} and references therein, can be expressed as $\nicefrac{H}{h}=s_{\mathrm{h}}$. The quantities $s_{\mathrm{h}}$ and $s_{\mathrm{H}}$ allow a concise discussion of the following three test configurations:
\begin{enumerate}
    \item The first test consists of setting $H=\mathrm{const.}$, i.e., $s_{\mathrm{H}}=\mathrm{const.}$ and varying $s_{\mathrm{h}}$. In other words, we only change the size of the local problems.
    \item The second test consists of setting $h=\mathrm{const.}$, i.e., $s_{\mathrm{h}}s_{\mathrm{H}}=\mathrm{const.}$ In other words, we keep the overall problem size constant while increasing the number of subdomains (i.e., processors). This is equivalent to testing for so-called strong scalability, which follows Amdahl's law.
    \item In the third test, we set $\nicefrac{H}{h}=s_{\mathrm{h}}=\mathrm{const.}$ and vary $s_{\mathrm{H}}$. \revised{In other words,} we increase both, the problem size and the number of subdomains/processors, while keeping the workload per processor constant. This is a test of the so-called weak scalability based on Gustafson's law.
\end{enumerate}
The results of the first scalability test are summarized in \autoref{tab:scaV1}. We obtain the expected convergence orders w.r.t. $s_{\mathrm{h}}$ as in the other numerical tests of this section. \revised{We further observe that $\cond{\interfaceMat}$ deteriorates fast with increasingly refined local discretizations. Applying the lumped preconditioner $\mathbf{M}_{\mathrm{L}}^{-1}$ improves the conditioning already but a better perfomance is shown by the Dirichlet preconditioner $\mathbf{M}_{\mathrm{D}}^{-1}$. This numerical test even suggests that the relation
\begin{equation}
    \cond{\mathbf{M}_{\mathrm{D}}^{-1}\interfaceMat}\leq C\left(\log_{10}(s_{\mathrm{h}})+1\right)^2 = C\left(\log_{10}\left(\nicefrac{H}{h}\right)+1\right)^2\label{eq:condBound}
\end{equation}
as in \cite{Mandel_2001aa,Klawonn_2002aa} holds. A visualization is provided in \autoref{fig:scalIndicators_a} where the dashed, red function was only added as a visual comparison. At last, we see that the size of $\mathbf{p}$, i.e. $\nCoarse$, remains constant when the subdomains are refined locally which agrees with our considerations in \autoref{rem:effPrimal}.}

\begin{table}
    \centering
    \pgfplotstabletypeset[
        col sep=comma,
        columns={sH,sh,p,noPrec condS,lumped condS,dir condS,noPrec Iter,lumped Iter,dir Iter,err},
        columns/sH/.style={column name=$s_{\mathrm{H}}$},
        columns/sh/.style={column name=$s_{\mathrm{h}}$},
        columns/p/.style={column name=$\nCoarse$, /pgf/number format/.cd, fixed, fixed zerofill, precision=0, 1000 sep={}},
        columns/noPrec condS/.style={column name=$\cond{\interfaceMat}$, /pgf/number format/.cd, sci, sci zerofill, precision=2},
        columns/noPrec Iter/.style={column name={$\mathrm{it}(\interfaceMat)$}},
        columns/lumped condS/.style={column name=$\cond{\mathbf{M}_{\mathrm{L}}^{-1}\interfaceMat}$, /pgf/number format/.cd, sci, sci zerofill, precision=2},
        columns/lumped Iter/.style={column name={$\mathrm{it}(\mathbf{M}_{\mathrm{L}}^{-1}\interfaceMat)$}},
        columns/dir condS/.style={column name=$\cond{\mathbf{M}_{\mathrm{D}}^{-1}\interfaceMat}$, /pgf/number format/.cd, sci, sci zerofill, precision=2},
        columns/dir Iter/.style={column name={$\mathrm{it}(\mathbf{M}_{\mathrm{D}}^{-1}\interfaceMat)$}},
        columns/err/.style={column name=$\errBfield$, /pgf/number format/.cd, sci, sci zerofill, precision=2},
        every head row/.style={before row=\hline,after row=\hline},
        every last row/.style={after row=\hline},   
    ]{data/ietiTest1Results_iter.csv}
    \caption{First scalability experiment with $s_{\mathrm{H}}=2$, i.e., $N=8$ on cube ($p=1$ and $\cgTol=\SI{1e-6}{}$). The value $\mathrm{it}(\bullet)$ refers to the number of PCG iterations with the respective (preconditioned) system matrix $\bullet$.}
    \label{tab:scaV1}
\end{table}

\begin{figure}
    \centering
    \begin{subfigure}[c]{0.49\linewidth}
        \centering
        \begin{tikzpicture}
            \begin{axis}[
                xmin=1, xmax=33, xlabel={Subdivisions $s_{\mathrm{h}}$}, ylabel={Condition Number}, xtick={2,4,8,16,32}, xticklabels={2,4,8,16,32}, width=0.9\linewidth, height=6cm, font=\small, legend pos = south east, legend style = {font=\scriptsize}, legend cell align=left
            ]
                \addplot+[mark=+,blue] table[x = {sh},y = {pcg eigest.}, col sep=comma] {data/ietiCurlConditioning.csv};

                \addplot+[mark=none,black,dashed,domain=2:32,samples=101] {(log10(x)+1)^2};
                \addplot+[mark=none,red,dashed,domain=2:32,samples=101] {1.8*(log10(x)+1)};

                \legend{$\cond{\mathbf{M}_{\mathrm{D}}^{-1}\interfaceMat}$,$\left(\log_{10}(s_{\mathrm{h}})+1\right)^2$,$1.8\left(\log_{10}(s_{\mathrm{h}})+1\right)$};
            \end{axis}
        \end{tikzpicture}
        \caption{Condition number with bound.}
        \label{fig:scalIndicators_a}
    \end{subfigure}
    \begin{subfigure}[c]{0.49\linewidth}
        \centering
        \begin{tikzpicture}
            \begin{loglogaxis}[
                xmin=4, xmax=8000, ymin=1, ymax=2e4, xlabel={Subdomains $N=s_{\mathrm{H}}^3$}, ylabel={Size of Coarse Problem}, width=0.9\linewidth, height=6cm, font=\small, legend pos = south east, legend style = {font=\scriptsize}, legend cell align=left, 
            ]
                \addplot+[mark=x,blue] table[x expr= {\thisrow{sH}^3},y = {p}, col sep=comma] {data/ietiTest2Results_iter.csv};
                \addplot+[mark=none,dashed,black,domain=8:4096,samples=200] {3*x};

                \legend{$\nCoarse$,$\mathcal{O}(N)$};
            \end{loglogaxis}
        \end{tikzpicture}
        \caption{Size of coarse problem with bound.}
        \label{fig:scalIndicators_b}
    \end{subfigure}
    \caption{Visualization of scalability indicators condition number of preconditioned interface problem and size of coarse problem.}
    \label{fig:scalIndicators}
\end{figure}

The results of the second scalability experiment (strong scaling) are shown in \autoref{tab:scaV2}. \revised{We keep the overall problem size constant while increasing the number of subdomains, i.e. we coarsen the local discretizations accordingly.} Therefore, it is not surprising that we observe the same error in each simulation run \revised{which confirms that a local refinement can be exchanged with a corresponding coarse refinement. For the size of the coarse problem $\nCoarse$, we observe a linear increase with respect to the number of subdomains $N=s_{\mathrm{H}}^3$ which is visualized in \autoref{fig:scalIndicators_b}. This verifies our considerations in \autoref{rem:effPrimal} and $\nCoarse=\mathcal{O}(N)$. The size of the coarse problem is further explored in \autoref{app:Ratio} where we compute relevant numbers, e.g. $\cardSet{\edgeSet{\lambda c}}$, analytically for the configuration of this subsection and compare them to the overall number of DOFs $n$. For the condition number and the amount of PCG iterations with different preconditioners, we observe a decrease as the number of local subdivisions decreases. This already suggests that the constant $C>0$ in \eqref{eq:condBound} does not depend on $s_{\mathrm{H}}$.}

\begin{table}
    \centering
    \pgfplotstabletypeset[
        col sep=comma,
        columns={sH,sh,p,noPrec condS,lumped condS,dir condS,noPrec Iter,lumped Iter,dir Iter,err},
        columns/sH/.style={column name=$s_{\mathrm{H}}$},
        columns/sh/.style={column name=$s_{\mathrm{h}}$},
        columns/p/.style={column name=$\nCoarse$, /pgf/number format/.cd, fixed, fixed zerofill, precision=0, 1000 sep={}},
        columns/noPrec condS/.style={column name=$\cond{\interfaceMat}$, /pgf/number format/.cd, sci, sci zerofill, precision=2},
        columns/noPrec Iter/.style={column name={$\mathrm{it}(\interfaceMat)$}},
        columns/lumped condS/.style={column name=$\cond{\mathbf{M}_{\mathrm{L}}^{-1}\interfaceMat}$, /pgf/number format/.cd, sci, sci zerofill, precision=2},
        columns/lumped Iter/.style={column name={$\mathrm{it}(\mathbf{M}_{\mathrm{L}}^{-1}\interfaceMat)$}},
        columns/dir condS/.style={column name=$\cond{\mathbf{M}_{\mathrm{D}}^{-1}\interfaceMat}$, /pgf/number format/.cd, sci, sci zerofill, precision=2},
        columns/dir Iter/.style={column name={$\mathrm{it}(\mathbf{M}_{\mathrm{D}}^{-1}\interfaceMat)$}},
        columns/err/.style={column name=$\errBfield$, /pgf/number format/.cd, sci, sci zerofill, precision=2},
        every head row/.style={before row=\hline,after row=\hline},
        every last row/.style={after row=\hline},   
    ]{data/ietiTest2Results_iter.csv}
    \caption{Second scalability experiment with $s_{\mathrm{h}}s_{\mathrm{H}}=32$ on cube ($p=1$ and $\cgTol=\SI{1e-6}{}$). The value $\mathrm{it}(\bullet)$ refers to the number of PCG iterations with the respective (preconditioned) system matrix $\bullet$.}
    \label{tab:scaV2}
\end{table}

\revised{The results of the third scalability experiment (weak scaling) shown in \autoref{tab:scaV3} support the findings of the previous scalability experiments. The same error occurs as in experiment 1, which implies again that a local refinement can be exchanged with a corresponding coarse refinement. The observations of the condition number and PCG iterations of the (preconditioned) interface problem indicate again that the number of subdomains has no or at most a minor influence. Following \cite{Farhat_2001aa}, this implies that our method is weakly (there called numerically) scalable. Only the solving of the coarse problem, although its size scales linearly with the number of subdomains, may remain as a bottleneck for large problems which motivates using inexact TI approaches \cite{Klawonn_2007aa} where \eqref{eq:coarseProb} is solved iteratively.

In summary, we observed the typical behavior for scalable TI methods w.r.t the preconditioned interface problem. Furthermore, we demonstrated that the size of the coarse problem only grows linearly with the number of subdomains and estimated its contribution to the overall system.}

\begin{table}
    \centering
    \pgfplotstabletypeset[
        col sep=comma,
        columns={sH,sh,p,noPrec condS,lumped condS,dir condS,noPrec Iter,lumped Iter,dir Iter,err},
        columns/sH/.style={column name=$s_{\mathrm{H}}$},
        columns/sh/.style={column name=$s_{\mathrm{h}}$},
        columns/p/.style={column name=$\nCoarse$, /pgf/number format/.cd, fixed, fixed zerofill, precision=0, 1000 sep={}},
        columns/noPrec condS/.style={column name=$\cond{\interfaceMat}$, /pgf/number format/.cd, sci, sci zerofill, precision=2},
        columns/noPrec Iter/.style={column name={$\mathrm{it}(\interfaceMat)$}},
        columns/lumped condS/.style={column name=$\cond{\mathbf{M}_{\mathrm{L}}^{-1}\interfaceMat}$, /pgf/number format/.cd, sci, sci zerofill, precision=2},
        columns/lumped Iter/.style={column name={$\mathrm{it}(\mathbf{M}_{\mathrm{L}}^{-1}\interfaceMat)$}},
        columns/dir condS/.style={column name=$\cond{\mathbf{M}_{\mathrm{D}}^{-1}\interfaceMat}$, /pgf/number format/.cd, sci, sci zerofill, precision=2},
        columns/dir Iter/.style={column name={$\mathrm{it}(\mathbf{M}_{\mathrm{D}}^{-1}\interfaceMat)$}},
        columns/err/.style={column name=$\errBfield$, /pgf/number format/.cd, sci, sci zerofill, precision=2},
        every head row/.style={before row=\hline,after row=\hline},
        every last row/.style={after row=\hline},   
    ]{data/ietiTest3Results_iter.csv}
    \caption{Third scalability experiment with $s_{\mathrm{h}}=2$ on cube ($p=1$ and $\cgTol=\SI{1e-6}{}$). The value $\mathrm{it}(\bullet)$ refers to the number of PCG iterations with the respective (preconditioned) system matrix $\bullet$.}
    \label{tab:scaV3}
\end{table}
    
    \section{Conclusions and Outlook}
    In this work, we discussed dual-primal Tearing and Interconnecting approach for magnetostatic problems using IGA edge-elements and mixed boundary conditions in combination with the tree-cotree gauge. In this context, we introduced an explicit way to construct appropriate trees and how to select primal DOFs to obtain local invertibility to enable parallel computations for each subdomain. Furthermore, we proved that the local subdomain contributions are invertible \revised{when using our proposed global tree in combination with our choice for the primal DOFs. Extensions to our work are for example considerations on and numerical experiments with inhomogeneous, possibly discontiuous material distributions as well as preconditioning in this context. Additionally, the application to industrial problems and an extension to eddy current problems are of interest.}

    \section*{Acknowledgements}
    The work is supported by the joint DFG/FWF Collaborative Research Centre CREATOR (DFG: Project-ID 492661287/TRR 361; FWF: 10.55776/F90) at TU Darmstadt, TU Graz and JKU Linz. We acknowledge the funding of The "Ernst Ludwig Mobility Grant" of the Association of Friends of Technical University of Darmstadt e.V.

    \section*{Declaration of Competing Interest}
    The authors declare that they have no known competing financial interests or personal relationships that could have appeared to influence the work reported in this paper.

    \section*{Declaration of Generative AI and AI-Assisted Technologies in the Writing Process}
    During the preparation of this work the authors used \texttt{DeepL} and \texttt{ChatGPT} in order to improve the readability and language of this paper. After using these tools/services, the authors reviewed and edited the content as needed and take full responsibility for the content of the publication.

    \appendix
    \section{Ratio between Coarse and Full Problem}\label{app:Ratio}
\revised{In this appendix, we examine the size of the coarse problem $\nCoarse$ and how it relates to the full number of DOFs $n=\sum_{i=1}^{N}n_i$ for the configuration of \autoref{subsec:scalab}.} First, we look at a single subdomain $\Omega^{(i)}$ with $s_{\mathrm{h}}$ subdivisions in every direction for $p=1$. Note that the following concepts can be extended to settings with $p>1$ by adding a constant offset to $s_{\mathrm{h}}$. The wire basket of $\Omega^{(i)}$ consists of $m_{\mathrm{e}}^{(i)}=12$ macroscopic edges and $m_{\mathrm{n}}^{(i)}=8$ vertices, each shared by three macroscopic edges. One can easily check that one of these edges contains $s_{\mathrm{h}}$ element edges and $s_{\mathrm{h}}+1$ nodes which yields
\begin{align*}
    \cardSet{\nodeSet{\lambda}^{(i)}}&=12s_{\mathrm{h}}-4 = 12\left(s_{\mathrm{h}} + 1\right) - 16 = m_{\mathrm{e}}^{(i)}\left(s_{\mathrm{h}} + 1\right) - (3-1)m_{\mathrm{n}}^{(i)} \\
    \cardSet{\edgeSet{\lambda}^{(i)}} &= 12s_{\mathrm{h}} = m_{\mathrm{e}}^{(i)}s_{\mathrm{h}}
\end{align*}
for the edges and nodes of the corresponding wire basket graph $\graph{\lambda}^{(i)}$. If we look at the broader picture with $N=s_{\mathrm{H}}^3$ subdomains, we can compute the number of macroscopic edges by
\begin{equation*}
    m_{\mathrm{e}}=\left.\cardSet{\edgeSet{}}\right\vert_{s_{\mathrm{h}}=1}=\left.3s_{\mathrm{h}}s_{\mathrm{H}}\left(s_{\mathrm{h}}s_{\mathrm{H}} + 1\right)^2\right\vert_{s_{\mathrm{h}}=1}=3s_{\mathrm{H}}\left(s_{\mathrm{H}} + 1\right)^2,
\end{equation*}
which implies that
\begin{equation}
    \cardSet{\edgeSet{\lambda}}=m_{\mathrm{e}}s_{\mathrm{h}}=3s_{\mathrm{h}}s_{\mathrm{H}}\left(s_{\mathrm{H}} + 1\right)^2
\end{equation}
is the number of edges on $\graph{\lambda}$. Correspondingly, we can state
\begin{equation}
    \cardSet{\nodeSet{\lambda}}=m_{\mathrm{e}}\left(s_{\mathrm{h}}+1\right)-\alpha m_{\mathrm{n}}=m_{\mathrm{e}}s_{\mathrm{h}} + \underbrace{(m_{\mathrm{e}} - \alpha)m_{\mathrm{n}}}_{=c}=\cardSet{\edgeSet{\lambda}} + c
\end{equation}
for the number of nodes on the global wire basket where $\alpha$ and $c$ are appropriate constants. For the global number of endpoints
\begin{equation*}
    m_{\mathrm{n}}=\left.\cardSet{\nodeSet{}}\right\vert_{s_{\mathrm{h}}=1}=\left.\left(s_{\mathrm{h}}s_{\mathrm{H}} + 1\right)^3\right\vert_{s_{\mathrm{h}}=1}=\left(s_{\mathrm{H}} + 1\right)^3
\end{equation*}
can be calculated. The constant $\alpha$ only depends on how many edges are connected to each respective endpoint, which is quite technical to compute but clearly independent of the local discretization. To compute $c$, we can exploit that $\cardSet{\nodeSet{\lambda}}=m_{\mathrm{n}}$ needs to be valid for $s_{\mathrm{h}}=1$. This results in
\begin{equation*}
    m_{\mathrm{n}}=\cardSet{\nodeSet{\lambda}}=m_{\mathrm{e}} - c ~~\Leftrightarrow ~~ c = 3s_{\mathrm{H}}\left(s_{\mathrm{H}} + 1\right)^2-\left(s_{\mathrm{H}} + 1\right)^3 = 2s_{\mathrm{H}}^3 + 3s_{\mathrm{H}}^2 - 1
\end{equation*}
and
\begin{equation}
    \cardSet{\edgeSet{\lambda c}}=c+1=2s_{\mathrm{H}}^3 + 3s_{\mathrm{H}}^2
\end{equation}
for the number of cotree elements on the wire basket. Then, the limit for the ratio $\nicefrac{n_{\lambda}}{n}$ is given by
\begin{equation}
    r_{\lambda}\coloneqq\lim\limits_{s_{\mathrm{H}}\rightarrow\infty}\frac{\cardSet{\edgeSet{\lambda c}}}{n}
    =
    \lim_{s_{\mathrm{H}}\rightarrow\infty}\left(\frac{2s_{\mathrm{H}}^{3} 
    + 
    3s_{\mathrm{H}}^{2}}{3s_{\mathrm{H}}^3s_{\mathrm{h}}(s_{\mathrm{h}}+1)^2}\right)
    =\frac{2}{3s_{\mathrm{h}}(s_{\mathrm{h}}+1)^2},
\end{equation}
where
\begin{equation*}
    n=N\cardSet{\edgeSet{}^{(i)}}=3s_{\mathrm{H}}^3s_{\mathrm{h}}(s_{\mathrm{h}}+1)^2
\end{equation*}
is valid.
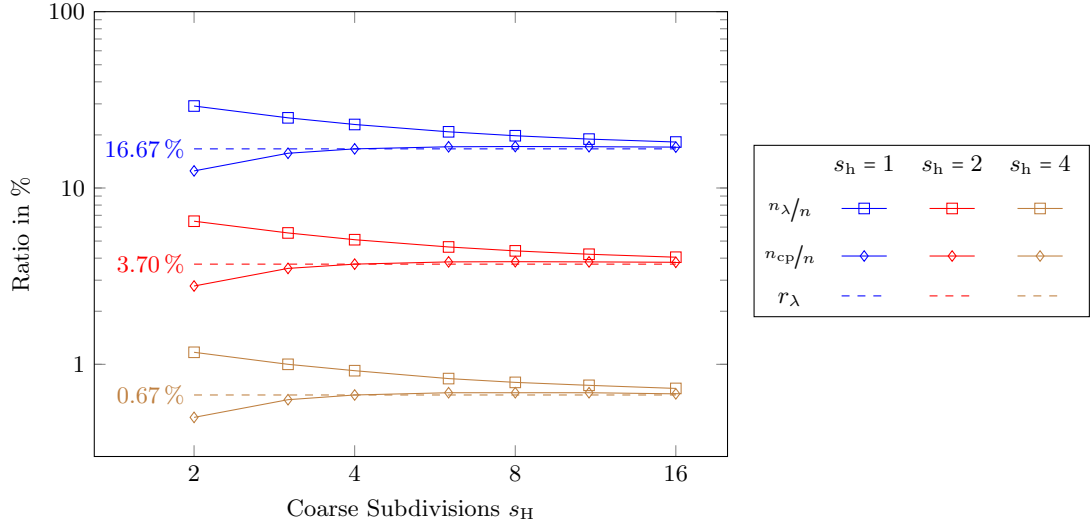
\begin{figure}
    \centering
    \begin{tikzpicture}
            \begin{loglogaxis}[
                xmin=1.3, xmax=20, ymin=0.3, ymax=100, ytick={1,10,100}, yticklabels={1,10,100}, xlabel={Coarse Subdivisions $s_{\mathrm{H}}$}, ylabel={Ratio in $\%$}, xtick={2,4,8,16}, xticklabels={2,4,8,16}, width=10cm, height=7.5cm, font=\small
            ]
                \addplot[mark=square, blue] table [
                    col sep=comma,
                    x = sH,
                    y expr = {ifthenelse(\thisrow{sh}==1,\thisrow{ratioWL},NaN)},
                ] {data/testFullTreeOnWireframe.csv};\label{plt:rWL1}
                \addplot[mark=square, red] table [
                    col sep=comma,
                    x = sH,
                    y expr = {ifthenelse(\thisrow{sh}==2,\thisrow{ratioWL},NaN)},
                ] {data/testFullTreeOnWireframe.csv};\label{plt:rWL2}
                \addplot[mark=square, brown] table [
                    col sep=comma,
                    x = sH,
                    y expr = {ifthenelse(\thisrow{sh}==4,\thisrow{ratioWL},NaN)},
                ] {data/testFullTreeOnWireframe.csv};\label{plt:rWL3}

                \addplot[mark=diamond, blue] table [
                    col sep=comma,
                    x = sH,
                    y expr = {ifthenelse(\thisrow{sh}==1,\thisrow{ratioPL},NaN)},
                ] {data/testFullTreeOnWireframe.csv};\label{plt:rPL1}
                \addplot[mark=diamond, red] table [
                    col sep=comma,
                    x = sH,
                    y expr = {ifthenelse(\thisrow{sh}==2,\thisrow{ratioPL},NaN)},
                ] {data/testFullTreeOnWireframe.csv};\label{plt:rPL2}
                \addplot[mark=diamond, brown] table [
                    col sep=comma,
                    x = sH,
                    y expr = {ifthenelse(\thisrow{sh}==4,\thisrow{ratioPL},NaN)},
                ] {data/testFullTreeOnWireframe.csv};\label{plt:rPL3}

            \addplot[mark=none, blue, dashed] table [
                col sep=comma,
                x = sH,
                y expr = {ifthenelse(\thisrow{sh}==1,\thisrow{estRatioL},NaN)},
            ] {data/testFullTreeOnWireframe.csv};\label{plt:rEL1}
            \addplot[mark=none, red, dashed] table [
                col sep=comma,
                x = sH,
                y expr = {ifthenelse(\thisrow{sh}==2,\thisrow{estRatioL},NaN)},
            ] {data/testFullTreeOnWireframe.csv};\label{plt:rEL2}
            \addplot[mark=none, brown, dashed] table [
                col sep=comma,
                x = sH,
                y expr = {ifthenelse(\thisrow{sh}==4,\thisrow{estRatioL},NaN)},
            ] {data/testFullTreeOnWireframe.csv};\label{plt:rEL3}
            \node[blue,anchor=east] at (axis cs: 2,16.67){16.67\,\%};
            \node[red,anchor=east] at (axis cs: 2,3.70){3.70\,\%};
            \node[brown,anchor=east] at (axis cs: 2,0.67){0.67\,\%};
        \end{loglogaxis}
        \matrix[
                matrix of nodes,
                draw, font=\small,
                inner sep=0.2em, draw, ampersand replacement={\&},
            ] at (11.3,3){
                \& $~s_{\mathrm{h}}=1~$ \& $~s_{\mathrm{h}}=2~$ \& $~s_{\mathrm{h}}=4~$ \\[0.5em]
                $\nicefrac{\cardSet{\edgeSet{\lambda c}}}{n}$ \& \ref{plt:rWL1} \& \ref{plt:rWL2} \& \ref{plt:rWL3} \\[0.5em]
                $\nicefrac{\nCoarse}{n}$ \& \ref{plt:rPL1} \& \ref{plt:rPL2} \& \ref{plt:rPL3} \\[0.5em]
                $r_{\lambda}$ \& \ref{plt:rEL1} \& \ref{plt:rEL2} \& \ref{plt:rEL3} \\[0.5em]
            };
            
    \end{tikzpicture}
    \caption{Ratio of cotree DOFs on wire basket $\cardSet{\edgeSet{\lambda c}}$ and primal DOFs $\nCoarse$ with respect to $n$ for varying $s_{\mathrm{h}}$ and $s_{\mathrm{H}}$.}
    \label{fig:ratios}
\end{figure}
\revised{Due to the relation $\cardSet{\edgeSet{\lambda c}}\geq \nCoarse$, we know that $\nicefrac{\nCoarse}{n}$ is bounded by the respective ratio with $\cardSet{\edgeSet{\lambda c}}$. Hence, $r_{\lambda}$ should be a good indicator for $\nicefrac{\nCoarse}{n}$ if $s_{\mathrm{H}}$ is large. The explained ratios are visualized in \autoref{fig:ratios} for different combinations of $s_{\mathrm{h}}$ and $s_{\mathrm{H}}$ where we can observe all expected properties. The estimated ratio $r_{\lambda}$ is close to $\nicefrac{\nCoarse}{n}$, therefore giving a good indication of the coarse problem size which can even be computed before assembling the discrete problem. We conclude that the coarse problem is only a fraction of the overall problem if the local systems are sufficiently refined.}

    \bibliographystyle{abbrvnat}
    \bibliography{bibtex}

\end{document}